\newcolumntype{C}[1]{>{\centering\arraybackslash}p{#1}}
\newcolumntype{L}[1]{>{\raggedright\let\newline\\\arraybackslash\hspace{0pt}}m{#1}}
\DeclareMathAlphabet{\mathpzc}{OT1}{pzc}{m}{it}
\newcommand{\ie}{{i.e.}\xspace}
\newcommand{\eg}{{e.g.}\xspace}
\newcommand{\figref}[1]{Fig.~\ref{fig:#1}}
\newcommand{\tabref}[1]{Table~\ref{tab:#1}}
\newcommand{\ignore}[1]{}
\newcommand{\setof}[1]{\ensuremath{\{#1\}}}
\newcommand{\mtc}[1]{\multicolumn{2}{c}{#1}}
\newcommand{\mypara}[1]{\noindent{\bf #1.}}
\newcommand{\Reals}{\mathbb{R}}
\newcommand{\PosZReals}{\mathbb{R}^{\ge 0}}
\newcommand{\rob}{\rho}
\newcommand{\introb}{[\rob]}
\newcommand{\lowrob}{\ell}
\newcommand{\hirob}{\upsilon}
\newcommand{\domain}{\mathcal{X}}
\newcommand{\Until}{\mathbf{U}}
\newcommand{\vx}{\mathbf{x}}
\newcommand{\timedomain}{\mathcal{T}}
\newcommand{\dinf}{\displaystyle\inf}
\newcommand{\dsup}{\displaystyle\sup}
\newcommand{\dmin}{\displaystyle\min}
\newcommand{\dmax}{\displaystyle\max}
\newcommand{\pvx}[1]{\vx_{[0,{#1}]}}
\newcommand{\trace}{\tau}
\newcommand{\horizon}{T_{H}}
\newcommand{\SynTree}[1]{\mathcal{T}_{#1}}
\newcommand{\scope}{\mathsf{hor}}
\newcommand{\parent}{\mathsf{parent}}
\newcommand{\treeNode}{v}
\newcommand{\rootOf}{\mathsf{root}}
\newcommand{\op}{\mathsf{op}}
\newcommand{\tem}{\mathbf{H}}
\newcommand{\alw}{\Box}
\newcommand{\ev}{\Diamond}
\newcommand{\runningformula}{\alw_{[0,a]}\left(\neg(y>0)\vee\ev_{[b,c]}(x>0)\right)}
\newcommand{\rsi}{$\mathtt{RoSI}$\xspace}
\newcommand{\rsis}{$\mathtt{RoSI}$s\xspace}
\newcommand{\worklist}{\mathsf{worklist}}
\newcommand{\completions}{\mathcal{C}}
\newcommand{\flo}{f_{\mathsf{inf}}}
\newcommand{\fhi}{f_{\mathsf{sup}}}
\newcommand{\vy}{\mathbf{y}}
\newcommand{\vxl}{{\vx_{\mathsf{inf}}}}
\newcommand{\vxu}{{\vx_{\mathsf{sup}}}}
\newcommand{\trivial}{(\vxl,\vxu)}
\newcommand{\bl}[1]{\textcolor{blue}{#1}}
\newcommand{\gr}[1]{\textcolor{black!60!green}{#1}}
\newcommand{\rd}[1]{\textcolor{red}{#1}}
\newcommand{\procName}{\mathsf{updateWorkList}}
\newcommand{\assign}{\mathtt{:=\ }}
\newcommand{\newTime}{t_{i+1}}
\newcommand{\newValue}{\vx_{i+1}}
\newcommand{\last}{\mathsf{last}}
\def\f {\varphi}
\newcommand{\tbAF}{\texttt{avoid\_front}}
\newcommand{\tbAL}{\texttt{avoid\_left}}
\newcommand{\tbAR}{\texttt{avoid\_right}}
\newcommand{\tbHCa}{\texttt{hill\_climb}_1}
\newcommand{\tbHCb}{\texttt{hill\_climb}_2}
\newcommand{\tbHCc}{\texttt{hill\_climb}_3}
\newcommand{\tbF}{\texttt{filter}}
\newcommand{\tbKB}{\texttt{keep\_bump}}
\newcommand{\tbWH}{\texttt{what\_hill}}
\newcommand{\fkb}{\varphi_{\texttt{keep\_bump}}}
\newcommand{\bumpr}{\texttt{bump\_right}}
\newcommand{\bumpl}{\texttt{bump\_left}}
\newcommand{\rt}{\textsuperscript{\textregistered}}
\newcommand{\fover}{\varphi_\mathit{overshoot}}
\newcommand{\ftemp}{\varphi_\mathit{transient}}
\newcommand{\abs}[1]{\left|#1\right|}
\newcolumntype{C}[1]{>{\centering\arraybackslash}p{#1}}
\newcolumntype{L}[1]{>{\raggedright\let\newline\\\arraybackslash\hspace{0pt}}m{#1}}
\tikzstyle{smalltext}=[font=\fontsize{7}{7}\selectfont]
\tikzstyle{normaltext}=[font=\fontsize{8}{8}\selectfont]
\tikzstyle{bigtext}=[font=\fontsize{10}{10}\selectfont]
\tikzstyle{block}=[draw,rectangle,minimum height=2em,minimum width=3em,smalltext,align=center]
\tikzstyle{val}=[mark=square*, mark options={solid,fill=black}, mark size=1pt]
\tikzstyle{valb}=[mark=square*, mark options={solid,fill=blue}, mark size=1pt]
\tikzstyle{valr}=[mark=square*, mark options={solid,fill=red}, mark size=1pt]
\tikzstyle{valg}=[mark=square*, mark options={solid,fill=black!60!green}, mark size=1pt]
\newcommand{\dada}{\text{-}\text{-}}
\newcommand{\edge}{\mathtt{F}}
\newcommand{\head}[1]{\min(#1)}
\newcommand{\tail}[1]{\max(#1)} 
\newcommand{\slidingmax}{\mathsf{SlidingMax}}
\newcommand{\popBack}[1]{#1 \assign #1 - \tail{#1}}
\newcommand{\popFront}[1]{#1 \assign #1 - \head{#1}}
\newcommand{\pushBack}[2]{#1 \assign #1 \cup \setof{#2}}
\numberwithin{equation}{section}
\g@addto@macro \normalsize {%
 \setlength\abovedisplayskip{2pt plus 0pt minus 2pt}
 \setlength\belowdisplayskip{4pt plus 2pt minus 2pt}
}
\title{Robust Online Monitoring of Signal Temporal Logic\\[.5cm]}
\author{Jyotirmoy V. Deshmukh\inst{1} \and
        Alexandre Donz\'{e}\inst{2} \and
        Shromona Ghosh\inst{2} \\ 
        Xiaoqing Jin \inst{1} \and
        Garvit Juniwal\inst{2} \and
        Sanjit A.  Seshia\inst{2}}
\institute{Toyota Technical Center,
        \email{$\mathtt{firstname.lastname@tema.toyota.com}$} \and 
           University of California Berkeley,
           \email{$\mathtt{\{donze, shromona.ghosh, garvitjuniwal, sseshia\}@eecs.berkeley.edu}$}}
\begin{document}

\maketitle

\begin{abstract}
Signal Temporal Logic (STL) is a formalism used to rigorously specify requirements of cyberphysical
systems (CPS), i.e., systems mixing digital or discrete components in interaction with a continuous
environment or analog components. STL is naturally equipped with a quantitative semantics which can
be used for various purposes: from assessing the robustness of a specification to guiding searches
over the input and parameter space with the goal of falsifying the given property over system
behaviors. Algorithms have been proposed and implemented for \emph{offline} computation of such
quantitative semantics, but only few methods exist for an \emph{online} setting, where one would
want to monitor the satisfaction of a formula during simulation.  In this paper, we formalize a
semantics for robust online monitoring of \emph{partial} traces, i.e., traces for which there might
not be enough data to decide the Boolean satisfaction (and to compute its quantitative counterpart).
We propose an efficient algorithm to compute it and demonstrate its usage on two large scale
real-world case studies coming from the automotive domain and from CPS education in a Massively Open
Online Course (MOOC) setting. We show that savings in computationally expensive simulations far
outweigh any overheads incurred by an online approach.

%%% Local Variables:
%%% mode: latex 
%%% TeX-master: "main" 
%%% End: 

\end{abstract}

\section{Introduction}\label{sec:intro}
Design engineers for embedded control software typically validate their
designs by inspecting concrete observations of system behavior.  For
instance, in the model-based development (MBD) paradigm, designers have
access to numerical simulation tools to obtain traces from models of
systems.  An important problem is then to be able to efficiently test
whether some logical property $\f$ holds for a given simulation trace. It
is increasingly common \cite{jin2013mining,FainekosSUY12acc,HoxhaAF14arch1,bartocci2014data,kong2014temporal}
to specify such properties using a real-time temporal logic such as Signal
Temporal Logic (STL) \cite{donze2010robust} or Metric Temporal Logic (MTL)
\cite{FainekosP09}.  An {\em offline monitoring} approach involves
performing an {\em a posteriori} analysis on {\em complete} simulation
traces (\ie, traces starting at time $0$, and lasting till a user-specified
time horizon).  Theoretical and practical results for offline monitoring
\cite{FainekosP09,donze2013robustmonitor,donze2010robust,MalerN04} focus on
the efficiency of monitoring as a function of the length of the trace, and
the size of the formula representing the property $\f$.

There are a number of situations where offline monitoring is unsuitable.
Consider the case where the monitor is to be deployed in an actual system
to detect erroneous behavior.  As embedded software is typically resource
constrained, offline monitoring -- which requires storing the entire
observed trace -- is impractical. Also, when a monitor is used in a
simulation-based validation tool, a single simulation may run for several
minutes or even hours.  If we wish to monitor a safety property over the
simulation, a better use of resources is to abort the simulation whenever a
violation is detected.  Such situations demand an {\em online monitoring
algorithm}, which has markedly different requirements.  In particular, a
good online monitoring algorithm must: (1) be able to generate intermediate
estimates of property satisfaction based on {\em partial signals}, (2) use
minimal amount of data storage, and (3) be able to run fast enough in a
real-time setting. 

Most works on online monitoring algorithms for logics such as Linear
Temporal Logic (LTL) or Metric Temporal Logic (MTL) have focussed on the
Boolean satisfaction of properties by partial signals
\cite{OuakOnlineMTL2014,eisner_truncated_2003,nickovic2007amt}. However,
recent work has shown that by assigning quantitative semantics to real-time
logics such as MTL and STL, problems such as bug-finding, parameter
synthesis, and robustness analysis can be solved using powerful
off-the-shelf optimization tools \cite{staliro,breach}.  A robust
satisfaction value is a function mapping a property $\f$ and a trace
$\vx(t)$ to a real number.  A large positive value suggests that $\vx(t)$
easily satisfies $\f$, a positive value close to zero suggests that
$\vx(t)$ is close to violating $\f$, and a negative value indicates a
violation of $\f$. While the recursive definitions of quantitative
semantics naturally define offline monitoring algorithms to compute robust
satisfaction values
\cite{FainekosP09,donze2010robust,donze2013robustmonitor}, there is limited
work on an online monitoring algorithm to do the same
\cite{dokhanchi2014line}.

The main technical and theoretical challenge of online monitoring lies in
the definition of a practical semantics for a temporal logic formula over a
partial signal, i.e., a signal trace with incomplete data which cannot yet
validate or invalidate $\f$. Past work \cite{eisner_truncated_2003} has
identified three views for the satisfaction of a LTL property $\f$ over a
partial trace $\trace$: (1) a {\em weak view} where the truth value of $\f$
over $\trace$ is assigned to {\em true} if there is some suffix of $\trace$
that satisfies $\f$, (2) a {\em strong view} when it is defined to be {\em
false} when some suffix of $\trace$ does not satisfy $\f$ and (3) a {\em
neutral view} when the truth value is defined using a truncated semantics
of LTL restricted to {\em finite} paths. In \cite{OuakOnlineMTL2014}, the
authors extend the truncated semantics to MTL, and in
\cite{dokhanchi2014line}, the authors introduce the notion of a
\emph{predictor}, which works as an oracle to complete the partial trace
and provide an estimated satisfaction value.  However, such a value cannot
be formally trusted in general as long as the data is incomplete. 

We now outline our major contributions in this paper.  In
Section~\ref{sec:rosi}, we present {\em robust interval} semantics for an
STL property $\f$ on a partial trace $\trace$ that unifies the different
semantic views of real-time logics on truncated paths.  Informally, the
robust interval semantics map a trace $\vx(t)$ and an STL property $\f$ to
an interval $(\lowrob,\hirob)$, with the interpretation that for any suffix
$u(t)$, $\lowrob$ is the greatest lower bound on the quantitative semantics
of the trace $\vx(t)$, and $\hirob$ is the corresponding lowest upper
bound. There is a natural correspondence between the interval semantics and
three-valued semantics: (1) the truth value of $\f$ is false according to
the weak view iff $\hirob$ is negative, and true otherwise; (2) the truth
value is true according to the strong view iff $\lowrob$ is positive, and
false otherwise; and (3) a neutral semantics, e.g., based on some
predictor, can be defined when $\lowrob<0<\hirob$, i.e., when there exist
both suffixes that can violate or satisfy $\f$.

In Section~\ref{sec:algo}, we present an efficient online algorithm to
compute the robust interval semantics for bounded horizon formulas. Our
approach is based on the offline algorithm of \cite{donze2013robustmonitor}
extended to work in a fashion similar to the incremental Boolean monitoring
of STL implemented in the tool AMT \cite{nickovic2007amt}. A key feature of
our algorithm is that it imposes minimal runtime overhead with respect to
the offline algorithm, while being able to compute robust satisfaction
intervals on partial traces.  In Section~\ref{sec:untimed}, we present
specialized algorithms to deal with commonly-used unbounded horizon
formulas using only a bounded amount of memory.

Finally, we present an implementation and experimental results on two
large-scale case studies: (i) industrial-scale Simulink models from the
automotive domain in Section~\ref{sec:experiments}, and (ii)  an automatic
grading system used in a massive online education initiative on
CPS~\cite{juniwal-emsoft14}. Since the online algorithm can abort
simulation as soon as the truth value of the property is determined, we see
a consistent 10\%-20\% savings in simulation time (which is typically
several hours) in a majority of experiments, with negligible overhead
($<1$\%). In general, our results indicate that the benefits of our online
monitoring algorithm over the offline approach far outweigh any overheads.

%%% Local Variables:
%%% mode: latex
%%% TeX-master: "main"
%%% End:

\section{Background}\label{sec:interval_robustness}
\label{sec:prelim}
\mypara{Interval Arithmetic} We now review interval arithmetic.  An
interval $I$ is a convex subset of $\Reals$. A singular interval $[a,a]$
contains exactly one point.  Intervals $(a,a)$, $[a,a)$, $(a,a]$, and
$\emptyset$ denote empty intervals.  We enumerate interval operations below
assuming open intervals.  Similar operations can be defined for closed,
open-closed, and closed-open intervals. 
\begin{equation}
\begin{array}[t]{l@{\hspace{1em}}l}
\begin{array}[t]{llcl}
1. & -I_1   & = & (-b_1,-a_1) \\
2. & c+I_1  & = & (c+a_1, c+b_1) 
\end{array} &
\begin{array}[t]{llcl}
3. & I_1 \oplus I_2 & = & (a_1+a_2,b_1+b_2)  \\
4. & \min(I_1,I_2)  & = & (\min(a_1,a_2), \min(b_1,b_2)) 
\end{array} \\
\multicolumn{2}{l}{%
5.\ I_1 \cap I_2    = \left\{\begin{array}{cl}
                            \emptyset                      & \text{if $\min(b_1,b_2) < \max(a_1,a_2)$} \\
                            (\max(a_1,a_2), \min(b_1,b_2)) & \text{otherwise.}
                            \end{array}
                     \right.
}
\end{array}
\end{equation}

\begin{definition}[Signal]
A {\em time domain} $\timedomain$ is a finite or infinite set of time
instants such that $\timedomain \subseteq \PosZReals$ with $0\in \timedomain$. A {\em signal} $\vx$ is a
function from $\timedomain$ to $\domain$. Given a time domain $\timedomain$, a {\em partial signal}
is any signal defined on a time domain $\timedomain'\subseteq\timedomain$.
\end{definition}

Simulation frameworks typically provide signal values at discrete time
instants, usually this is a by-product of using a numerical technique to
solve the differential equations in the underlying system.  These
discrete-time solutions are assumed to be sampled versions of the actual
signal, which can be reconstructed using some form of interpolation.  In
this paper, we assume constant interpolation to reconstruct the signal
$\vx(t)$, \ie, given a sequence of time-value pairs $(t_0,\vx_0), \ldots,
(t_n,\vx_n)$, for all $t \in [t_0,t_n)$, we define $\vx(t) = \vx_i$ if $t
\in [t_i,t_{i+1})$, and $\vx(t_n) = \vx_n$. Further, let $\timedomain_n
\subseteq \timedomain$ represent the finite subset of time instants at
which the signal values are given.

\mypara{Signal Temporal Logic} We use Signal
Temporal Logic (STL) \cite{donze2010robust} to analyze time-varying
behaviors of signals.  We now present its syntax and semantics.  A
{\em signal predicate} $\mu$ is a formula of the form $f(\vx) > 0$,
where $\vx$ is a variable that takes values from $\domain$, and $f$ is
a function from $\domain$ to $\mathbb{R}$.  For a given $f$, let
$\flo$ denote $\inf_{\vx \in \domain} f(\vx)$, \ie, the {\em greatest
lower bound} of $f$ over $\domain$.  Similarly, let $\fhi = \sup_{\vx
\in \domain} f(\vx)$.  The syntax of an STL formula $\varphi$ is
defined in Eq.~\eqref{eq:stl_syntax}. Note that $\alw$ and  $\ev$ can
be defined in terms of the $\Until$ operator, but we include them for
convenience.
\begin{equation}
\label{eq:stl_syntax}
    \varphi ::=  \mu                             \mid 
                 \neg \varphi                    \mid 
                 \varphi \wedge \varphi          \mid 
                 \alw_{(u,v)} \varphi             \mid 
                 \ev_{(u,v)} \varphi        \mid 
                 \varphi \Until_{(u,v)} \varphi  
\end{equation}

Quantitative semantics for timed-temporal logics have been proposed
for STL in \cite{donze2010robust}; we include the definition below.

\begin{definition}[Robust Satisfaction Value]
The {\em robust satisfaction value} is a function $\rob$ mapping
$\varphi$, the  signal $\vx$, and a time $\tau \in \timedomain$ as
follows:
\begin{equation}
\begin{array}{l@{\hspace{2em}}c@{\hspace{2em}}l}
\rob\left(f(\vx) > 0, \vx, \tau\right) & = & 
f(\vx(\tau)) \\
\rob\left(\neg \varphi, \vx, \tau\right) & = & 
-\rob(\varphi,\vx,\tau) \\
\rob\left(\varphi_1 \wedge \varphi_2, \vx, \tau\right) & = & 
\min\left(\rob(\varphi_1,\vx,\tau),\rob(\varphi_2,\vx,\tau)\right) \\
\rob\left(\alw_I \varphi, \vx, \tau\right) & = & 
\dinf_{\tau'\in\tau+I} \rob(\varphi,\vx,\tau') \\
\rob\left(\ev_I \varphi, \vx, \tau\right) & = & 
\dsup_{\tau'\in\tau+I} \rob(\varphi,\vx,\tau') \\
\rob\left(\varphi \Until_I \psi, \vx, \tau\right) & = & 
\dsup_{\tau_1 \in\tau+I} 
     \min 
         \left( 
                \rob(\psi, \vx, \tau_1), 
                \dinf_{\tau_2 \in (\tau, \tau_1)} \rob(\varphi, \vx, \tau_2)
          \right)
\end{array}
\end{equation}
\end{definition}

Here, the translation from quantitative semantics to the usual Boolean
satisfaction semantics is that a signal $\vx$ satisfies an STL formula
$\varphi$ at a time $\tau$ iff the robust satisfaction value
$\rob(\varphi,\vx,\tau) \ge 0$.

%%% Local Variables:
%%% mode: latex
%%% TeX-master: "main"
%%% End:

\section{Robust Interval Semantics}\label{sec:rosi}
In what follows, we assume that we wish to monitor the robust
satisfaction value of a signal over a finite time-horizon $\horizon$.
We assume that the signal is obtained by applying piecewise constant
interpolation to a sampled signal defined over time-instants
$\setof{t_0,t_1,\ldots,t_N}$, such that $t_N = \horizon$.  In an
online monitoring context, at any time $t_i$, only the partial signal
over time instants $\setof{t_0, \ldots, t_i}$ is available, and the
rest of the signal becomes available in discrete time increments.  We
define robust satisfaction semantics of STL formulas over such partial
signals using an interval-based semantics. Such a {\em robust
satisfaction interval} (\rsi) includes all possible robust
satisfaction values corresponding to the suffixes of the
partial signal.  In this section, we formalize the recursive
definitions for the robust satisfaction interval of an STL formula
with respect to a partial signal, and in the next section we will
discuss an efficient algorithm to compute and maintain these
intervals.

\begin{definition}[Prefix, Completions]
Let \mbox{$\{t_0$, $\ldots$, $t_i\}$} be a finite set of time instants such
that $t_i \leq \horizon$, and let $\pvx{i}$ be a partial signal over the
time domain $[t_0,t_i]$.  We say that $\pvx{i}$ is a
prefix of a signal $\vx$ if for all $t \leq t_i$, $\vx(t) =
\pvx{i}(t)$.  The {\em set of completions} of a partial signal
$\pvx{i}$ (denoted by $\completions(\pvx{i})$) is defined as the set
$\setof{\vx \mid \text{$\pvx{i}$ is a prefix of $\vx$}}$.
\end{definition}

\begin{definition}[Robust Satisfaction Interval (\rsi)]
\label{def:rsi} The robust satisfaction interval of an STL formula
$\varphi$ on a partial signal $\pvx{i}$ at a time $\tau \in [t_0,t_N]$
is an interval %\footnote{This interval is closed if $\inf$ and $\sup$
%exist, and is open otherwise.} 
$I$ such that:
\[
\inf(I) = \displaystyle \inf_{\vx \in \completions(\pvx{i})} \rob(\varphi,\vx,\tau) 
\qquad\mathrm{and}\qquad 
\sup(I) = \displaystyle \sup_{\vx \in \completions(\pvx{i})} \rob(\varphi,\vx,\tau)
\]
\end{definition}

\begin{definition}
\renewcommand{\arraystretch}{1.4}
\label{def:introb}
We now define a recursive function $\introb$ that maps a given formula
$\varphi$, a partial signal $\pvx{i}$ and a time $\tau \in \timedomain$ to an
interval $\introb(\varphi,\pvx{i},\tau)$.

\begin{equation}\label{eq:introb}
\begin{array}{l@{\hspace{1em}}c@{\hspace{1em}}l} 
% \varphi &   \introb(\varphi,\pvx{i},\tau)  \\
% \hline
\introb\left(f(\pvx{i})>0, \pvx{i}, \tau\right) & = &
\left\{
    \begin{array}{ll}
       [f(\pvx{i}(\tau)),f(\pvx{i}(\tau))] & \tau \in [t_0,t_i] \\
       \,[\flo,\fhi] & \text{otherwise.}
    \end{array}
\right.  \\
\introb\left(\neg \varphi, \pvx{i}, \tau\right) & = &
-\introb(\varphi, \pvx{i}, \tau) \\
\introb\left(\varphi_1 \wedge \varphi_2, \pvx{i}, \tau\right) & = &
\min(\introb(\varphi_1,\pvx{i},\tau), \introb(\varphi_2,\pvx{i},\tau)) \\
\introb\left(\alw_I \varphi, \pvx{i}, \tau\right) & = &
\dinf_{t \in \tau+ I} \left(\introb(\varphi,\pvx{i},\tau)\right) \\
\introb\left(\ev_I \varphi, \pvx{i}, \tau\right) & = &
\dsup_{t \in \tau+ I} \left(\introb(\varphi,\pvx{i},\tau)\right) \\
\introb\left(\varphi_1 \Until_I \varphi_2, \pvx{i}, \tau\right) & = &
\dsup_{\tau_2 \in \tau+ I} 
     \min\left(\begin{array}{l}
               \introb(\varphi_2,\pvx{i},\tau_2), \\ 
               \dinf_{\tau_1 \in (\tau,\tau_2)} \introb(\varphi_1,\pvx{i},\tau_1))
               \end{array}
         \right)
\end{array}
\end{equation}
\end{definition}

The following lemma that can be proved by induction over the structure
of STL formulas shows that the interval obtained by applying the
recursive definition for $\introb$ is indeed the robust satisfaction
interval as defined in Def.~\ref{def:rsi}.

\begin{lemma}
For any STL formula $\varphi$, the function
$\introb(\varphi,\pvx{i},\tau)$ defines the robust satisfaction
interval for the formula $\varphi$ over the signal $\pvx{i}$ at time
$\tau$.
\end{lemma}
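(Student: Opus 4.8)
The plan is to prove the claim by structural induction on the STL formula $\varphi$, showing at each node that the interval produced by the recursive function $\introb$ has its infimum and supremum equal to the infimum and supremum (respectively) of the true robust satisfaction value $\rob(\varphi,\vx,\tau)$ ranging over all completions $\vx \in \completions(\pvx{i})$. In other words, for every $\varphi$ I will establish the two identities $\inf\bigl(\introb(\varphi,\pvx{i},\tau)\bigr) = \inf_{\vx} \rob(\varphi,\vx,\tau)$ and $\sup\bigl(\introb(\varphi,\pvx{i},\tau)\bigr) = \sup_{\vx} \rob(\varphi,\vx,\tau)$, which is exactly the content of Definition~\ref{def:rsi}. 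The induction proceeds along the grammar in Eq.~\eqref{eq:stl_syntax}.

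\textbf{Base case and the role of interpolation.} For the atomic predicate $\mu = f(\vx)>0$, I split on whether $\tau$ falls inside the observed window $[t_0,t_i]$ or beyond it. If $\tau \in [t_0,t_i]$, then every completion agrees with $\pvx{i}$ at $\tau$, so $\rob(\mu,\vx,\tau)=f(\pvx{i}(\tau))$ is the same constant for all $\vx$, and both the infimum and supremum collapse to this single value, matching the singular interval $[f(\pvx{i}(\tau)),f(\pvx{i}(\tau))]$. If $\tau$ lies beyond $t_i$, the value $\vx(\tau)$ is unconstrained over $\domain$, so $\rob$ ranges over $\{f(\vx(\tau)) : \vx(\tau)\in\domain\}$, whose infimum and supremum are exactly $\flo$ and $\fhi$ by their definitions as $\inf_{\vx\in\domain}f(\vx)$ and $\sup_{\vx\in\domain}f(\vx)$; this matches $[\flo,\fhi]$. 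I would note here that the piecewise-constant interpolation assumption is what makes ``agreement on $[t_0,t_i]$'' well-defined and keeps the completion set rich enough that the atomic bounds are attained.

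\textbf{Inductive cases.} For the Boolean and temporal connectives I invoke the induction hypothesis on subformulas and then match the recursive clause of $\introb$ against the corresponding clause of $\rob$, using the interval operations of Eq.~(1). For negation, $-\introb$ swaps the endpoints, mirroring the fact that $\inf_{\vx}(-\rob)=-\sup_{\vx}\rob$ and $\sup_{\vx}(-\rob)=-\inf_{\vx}\rob$. For conjunction, I use the monotonicity of $\min$ to push the endpoint-wise $\min$ of intervals through the completion-wise infimum and supremum. The temporal operators $\alw_I$ and $\ev_I$ reduce to taking $\inf$ and $\sup$ over the time window $\tau+I$; here I combine the induction hypothesis at each $\tau'$ with the fact that interchanging the completion-supremum and the time-supremum (and likewise for infimum) preserves the bound, because $\sup$ commutes with $\sup$ and $\inf$ commutes with $\inf$. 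The $\Until_I$ case is the most delicate, since it nests a $\sup$ over $\tau_2$, an inner $\min$, and an $\inf$ over the interval $(\tau,\tau_2)$.

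\textbf{Main obstacle.} The hard part will be the $\Until$ operator, and more generally justifying the interchange of the completion-wise extremum with the nested time-domain extrema. The subtlety is that $\sup_{\vx}\min(a(\vx),b(\vx))$ need \emph{not} equal $\min(\sup_{\vx}a(\vx),\sup_{\vx}b(\vx))$ in general, so I must argue carefully that the endpoint computation is still correct. The key observation that rescues this is that the completions are chosen \emph{pointwise independently} at the unobserved times: distinct time instants of the signal can be completed independently, so the worst-case (or best-case) completion for the outer $\sup$ over $\tau_2$ can be realized simultaneously with the worst/best completion witnessing the inner $\inf$ over $(\tau,\tau_2)$, letting the extremum distribute through the $\min$ at the endpoint level exactly as the interval-valued $\min$ of Eq.~(1) prescribes. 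I would make this decoupling explicit, and for any portion of the time window lying inside $[t_0,t_i]$ (where the signal is fixed) the corresponding interval is singular, so no interchange issue arises there; the argument only needs the decoupling on the unobserved suffix, which the completion set supplies.
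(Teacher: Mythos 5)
Your overall strategy, structural induction on $\varphi$, is exactly what the paper invokes; the paper in fact gives nothing beyond the one-sentence assertion that such an induction works, so all the substance lies in the cases you spell out. The problem is that your resolution of what you correctly identify as the main obstacle does not hold, and the induction already breaks at conjunction, before you ever reach $\Until$. Your decoupling argument --- that completions may be chosen independently at distinct unobserved time instants, so extrema can be realized simultaneously --- is valid only when the two operands of a $\min$ depend on \emph{disjoint} sets of unobserved samples. For $\varphi_1\wedge\varphi_2$ both conjuncts are evaluated at the \emph{same} time $\tau$ and in general constrain the \emph{same} unobserved values, so no single completion witnesses both suprema at once. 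Concretely, take $\varphi=(x>0)\wedge\neg(x>0)$ with $\tau>t_i$ and $\flo<0<\fhi$. Every completion $\vx\in\completions(\pvx{i})$ gives
\begin{equation*}
\rob(\varphi,\vx,\tau)=\min\bigl(\vx(\tau)\,,\,-\vx(\tau)\bigr)\le 0,
\end{equation*}
so the supremum in Def.~\ref{def:rsi} is $0$; yet the recursion of Def.~\ref{def:introb} computes $\min\bigl([\flo,\fhi],[-\fhi,-\flo]\bigr)$, whose upper endpoint is $\min(\fhi,-\flo)>0$. Hence $\introb$ strictly over-approximates the true robust satisfaction interval on this instance, and the exact equality you set out to prove (and that the lemma literally asserts) is falsified by a pointwise-independence argument alone; the same shared-dependence phenomenon recurs for $\Until$ and for temporal operators applied to non-atomic arguments with overlapping horizons.

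What your monotonicity reasoning does establish is the one-sided containment: every robustness value achievable over $\completions(\pvx{i})$ lies inside $\introb(\varphi,\pvx{i},\tau)$, i.e.\ soundness of the interval as an over-approximation --- which is also all that early termination in the monitoring algorithm actually requires (a negative upper bound still certifies violation, a positive lower bound still certifies satisfaction). To recover exact equality you would need an extra hypothesis ruling out shared dependence, e.g.\ that the operands of every $\wedge$, $\vee$, $\Until$ depend on disjoint unobserved samples (in particular no atomic predicate over the same signal component occurring with mixed polarity over overlapping windows); under such a restriction your $\epsilon$-style simultaneous-choice argument for $\alw_I$ and $\ev_I$ over atomic arguments is exactly the right tool. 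As written, however, the proposal claims more than the induction can deliver, and the step ``push the endpoint-wise $\min$ through the completion-wise supremum'' is precisely the step that fails.
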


%%% Local Variables: 
%%% mode: latex
%%% TeX-master: "main"
%%% End: 

\section{Online Algorithm}\label{sec:algo}
Donz{\'{e}} et al.~\cite{donze2013robustmonitor} present an offline
algorithm for monitoring STL formulas over (piecewise) linearly
interpolated signals.  A na\"{i}ve implementation of an online
algorithm is as follows: at time $t_i$, use a modification of the
offline monitoring algorithm to recursively compute the robust
satisfaction intervals as defined by Def.~\ref{def:introb} to the
signal $\pvx{i}$.  We observe that such a procedure does many repeated
computations that can be avoided by maintaining the results of
intermediate computations.  Furthermore, the na\"{i}ve procedure
requires storing the signal values over the entire time horizon, which
makes it memory-intensive.  In this section, we present the main
technical contribution of this paper: {\em an online algorithm that is
memory-efficient and avoids repeated computations}.

As in the offline monitoring algorithm in \cite{donze2013robustmonitor}, an essential ingredient of
the online algorithm is Lemire's running maximum filter algorithm \cite{lemire2006streaming}.  The
problem this algorithm addresses is the following: given a sequence of values $a_1,\ldots,a_n$, find
the maximum (resp. minimum) over all windows of size $w$, \ie, for all $j$, $\max_{i\in[j,j+w)} a_i$
(resp.  $\min_{i\in[j,j+w)} a_i$). We briefly review an extension of Lemire's algorithm over
piecewise-constant signals with variable time steps, given as Algorithm~\ref{algo:lemire}.  The main
observation in Lemire's algorithm is that it is sufficient to maintain a descending
(resp. ascending) monotonic edge (noted $\edge$ in Algorithm~\ref{algo:lemire}) to compute the sliding
maxima (resp. minima), in order to achieve an optimal procedure (measured in terms of the number of
comparisons between elements).

\begin{algorithm}[t]
% \caption{$\slidingmax((t_0,\vx_0),\ldots,(t_N,\vx_N), [a,b])$}
\caption{SlidingMax($(t_0,\vx_0),\ldots,(t_N,\vx_N)$)}
\label{algo:lemire}
% Lemire's running maximum algorithm
\SetVlineSkip{0em}
\DontPrintSemicolon
\KwIn{Window: $[a,b]$} 
\KwOut{Sliding maximum $\vy(t)$ over times in $[t_0,t_N]$}
$\edge$ $\assign$ $\setof{0}$ \tcp{$\edge\mathtt{~is~the~set~of~times~representing~the~monotonic~edge}$}
$i$ $\assign$ $0$ ; $s,t$ $\assign$ $t_0 - b$ \;
\While{$t + a < t_N$}{
   \lIf{$\edge \neq \emptyset$}{
           % $j$ $\assign$ $\head{\edge}$ \;
           $t$ $\assign$ $\min(t_{\head{\edge}}-a, t_{i+1}-b)$ 
   } \lElse {
           $t$ $\assign$ $t_{i+1} - b$ 
   }
   \eIf{$t = t_{i+1} - b$} {
      % \tcc*[l]{New point encountered}
      % $k$ $\assign$ $\tail{\edge}$ \;
      \While{$\vx_{i+1} \ge \vx_{\tail{\edge}} \ \wedge\ \edge \neq \emptyset$}{
         $\popBack{\edge}$  
         % $k$ $\assign$ $\tail{\edge}$ \;
      }
      $\pushBack{\edge}{i+1}$, $i$ $\assign$ $i+1$ 
   } (\tcp*[h]{$\mathtt{Slide~window~to~the~right}$}) {
       \lIf{$s > t_0$} {
            $\vy(s)$ $\assign$ $\vx_{\head{\edge}}$  
       } \lElse {
            $\vy(t_0)$ $\assign$ $\vx_{\head{\edge}}$
       }
       $\popFront{\edge}$, $s$ $\assign$ $t$ \;
   }
}
\end{algorithm}

% \subsection{Online Monitoring Algorithm}

We first focus on the fragment of STL where each temporal operator is
bounded by a time-interval $I$ such that $\sup(I)$ is finite.  The
procedure for online monitoring is an algorithm that maintains in
memory the syntax tree of the formula $\varphi$ to be monitored,
augmented with some book-keeping information. First, we formalize some
notation. For a given formula $\varphi$, let $\SynTree{\varphi}$
represent the syntax tree of $\varphi$, and let
$\rootOf(\SynTree{\varphi})$ denote the root of the tree.  Each node
in the syntax tree (other than a leaf node) corresponds to an STL
operator $\neg, \vee, \wedge, \alw_I$ or $\ev_I$.\footnote{We omit the
case of $\Until_I$ here for lack of space, although the rewriting
approach of \cite{donze2013robustmonitor} can also be adapted and was
implemented in our tool.} We will use $\tem_I$ to denote any temporal
operator bounded by interval $I$.  For a given node $\treeNode$, let
$\op(\treeNode)$ denote the operator for that node.  For any node
$\treeNode$ in $\SynTree{\varphi}$ (except the root node), let
$\parent(\treeNode)$ denote the unique parent of $\treeNode$. 

\begin{figure}[t]
\centering
\begin{tikzpicture}[>=stealth']
\tikzstyle{smalltext}=[font=\fontsize{8}{8}\selectfont]
\tikzstyle{block}=[draw,rectangle,minimum height=1.8em,minimum
width=3em,smalltext,align=center,inner sep=1pt]

% \node[block,
%      label=right:{\scriptsize ${[0]}$}] (root) 
%     {$\Diamond_{[0,a]}$};
% \node[block,node distance=10mm,below of=root,
%      label=right:{\scriptsize $[0,a]$}] (n0)
%     {$\bigwedge$};
% \node[coordinate,node distance=15mm,below of=n0] (c) {};
% \node[block,node distance=8mm,left of=c,
%       label=left:{\scriptsize $[0,a]$}] (n1) 
%     {$\Diamond_{[b,c]}$};
% \node[coordinate,node distance=5mm,above of=c] (c3) {};
% \node[block,node distance=8mm,right of=c3,
%       label=right:{\scriptsize $[0,a]$}] (n2)
%     {$\Box_{[d,e]}$};
% \node[block,node distance=10mm,below of=n1,
%       label=left:{\scriptsize $[b,a\!+\!c]$}] (n3)
%     {$y<0$};
% \node[block,node distance=10mm,below of=n2,
%       label=right:{\scriptsize $[d,a\!+\!e]$}] (n4)
%     {$\Diamond_{[f,g]}$};
% \node[block,node distance=10mm,below of=n4,
%       label=right:{\scriptsize $[d\!+\!f,a\!+\!e\!+\!g]$}] (n5)
%     {$x>0$};
% \draw[->] (root) to (n0);
% \draw[->] (n0) to (n1);
% \draw[->] (n0) to (n2);
% \draw[->] (n1) to (n3);
% \draw[->] (n2) to (n4);
% \draw[->] (n4) to (n5);

% \node[block,node distance=50mm,right of=root,
\node[block, label=below:{\scriptsize $[0]$}] (root2) 
    {$\Box_{[0,a]}$};
\node[block,node distance=15mm,right of=root2,
     label=below:{\scriptsize $[0,a]$}] (n02)
    {$\bigvee$};
\node[coordinate,node distance=15mm,right of=n02] (c2) {};
\node[block,node distance=6mm,above of=c2,
      label=below:{\scriptsize $[0,a]$}] (n12) 
    {$\neg$};
\node[block,node distance=6mm,below of=c2,
      label=below:{\scriptsize $[0,a]$}] (n22)
    {$\Diamond_{[b,c]}$};
\node[block,node distance=20mm,right of=n12,
      label=below:{\scriptsize $[0,a]$}] (n32)
    {$y > 0$};
\node[block,node distance=20mm,right of=n22,
      label=below:{\scriptsize $[b,a\!+\!c]$}] (n42)
    {$x>0$};
\draw[->] (root2) to (n02);
\draw[->] (n02) to (n12);
\draw[->] (n02) to (n22);
\draw[->] (n12) to (n32);
\draw[->] (n22) to (n42);

\end{tikzpicture}
% \caption{Syntax trees for $\psi$ (defined in \eqref{eq:bigformula}) and 
% $\varphi$ (defined in \eqref{eq:runf}) annotated with required intervals\label{fig:requiredintervals}}
\caption{Syntax tree $\SynTree{\varphi}$ for $\varphi$ (given in
\eqref{eq:runf}) with each node $\treeNode$ annotated with
$\scope(\treeNode)$. \label{fig:timehorizons}}
\end{figure}

%%% Local Variables:
%%% mode: latex
%%% TeX-master: "../main"
%%% End:

Algorithm~\ref{algo:online_monitoring} does the online \rsi
computation.  Like the offline algorithm, it is a dynamic programming
algorithm operating on the syntax tree of the given STL formula, \ie,
computation of the \rsi of a formula combines the \rsis for its
constituent sub-formulas in a bottom-up fashion.  As computing the
\rsi at a node $\treeNode$ requires the \rsis at the child-nodes, this
computation has to be delayed till the \rsis at the children of
$\treeNode$ in a certain time-interval are available.  We call this
time-interval the {\em time horizon} of $\treeNode$ (denoted
$\scope(\treeNode)$), and define it recursively in
Eq.~\eqref{eq:scope}.
%\vspace{-0.5em}
\begin{equation}
\label{eq:scope}
\scope(\treeNode) = 
\left\{ 
\begin{array}{ll}
[0] & \text{if $\treeNode$ = $\rootOf(\SynTree{\varphi})$} \\
I \oplus \scope(\parent(\treeNode)) & 
      \text{if $\treeNode \neq \rootOf(\SynTree{\varphi})$ and $\op(\parent(\treeNode))$ = $\tem_I$} \\
     \scope(\parent(\treeNode)) &
        \text{otherwise.}
    \end{array}
    \right.
\end{equation}
We illustrate the working of the algorithm using a small
example then give a brief sketch of the various steps in the
algorithm.

\begin{example}
Consider formula \eqref{eq:runf}. We show $\SynTree{\varphi}$ and
$\scope(\treeNode)$ for each node $\treeNode$ in $\SynTree{\varphi}$
in \figref{timehorizons}. In rest of the paper, we use $\varphi$ as a
running example\footnote{We remark that $\varphi$ is equivalent to
$\alw_{[0,a]} \left((y > 0) \implies \ev_{[b,c]}(x>0)\right)$, which
is a common formula used to express a timed causal relation between
two signals.}.
\begin{equation}
\varphi \triangleq \runningformula 
\label{eq:runf} 
\end{equation}
\end{example}

The algorithm augments each node $\treeNode$ of $\SynTree{\varphi}$
with a double-ended queue, that we denote $\worklist[\treeNode]$. Let
$\psi$ be the subformula denoted by the tree rooted at $\treeNode$.
For the partial signal $\pvx{i}$, the algorithm maintains in
$\worklist[\treeNode]$, the \rsi $\introb(\psi, \pvx{i}, t)$ for each
$t \in \scope(\treeNode) \cap [t_0,t_i]$. We denote by
$\worklist[\treeNode](t)$ the entry corresponding to time $t$ in
$\worklist[\treeNode]$.  When a new data-point $\vx_{i+1}$
corresponding to the time $t_{i+1}$ is available, the monitoring
procedure updates each $\introb(\psi,\pvx{i},t)$ in
$\worklist[\treeNode]$ to $\introb(\psi,\pvx{i+1},t)$.

In Fig.~\ref{fig:worklists}, we give an example of a run of the
algorithm. We assume that the algorithm starts in a state where it has
processed the partial signal $\pvx{2}$, and show the effect of
receiving data at time-points $t_3$, $t_4$ and $t_5$.  The figure
shows the states of the worklists at each node of $\SynTree{\varphi}$
at these times when monitoring the STL formula $\varphi$ presented in
Eq.~\eqref{eq:runf}. Each row in the table adjacent to a node shows
the state of the worklist after the algorithm processes the value at
the time indicated in the first column.

The first row of the table shows the snapshot of the worklists at time
$t_2$.  Observe that in the worklists for the subformula $y>0$, $\neg
y>0$, because $a<b$, the data required to compute the \rsi at $t_0$,
$t_1$ and the time $a$, is available, and hence each of the \rsis is
singular.  On the other hand, for the subformula $x>0$, the time
horizon is $[b,a+c]$, and no signal value is available at any time in
this interval.  Thus, at time $t_2$, all elements of
$\worklist[\treeNode_{x>0}]$ are $(\vxl,\vxu)$ corresponding to the
greatest lower bound and lowest upper bound on $x$.

To compute the values of $\ev_{[b,c]}(x>0)$ at any time $t$, we take
the supremum over values from times $t+b$ to $t+c$. As the time
horizon for the node corresponding to $\ev_{[b,c]} (x>0)$ is $[0,a]$,
$t$ ranges over $[0,a]$. In other words, we wish to perform the
sliding maximum over the interval $[0+b, a+c]$, with a window of
length $c-b$.  We can use the algorithm for computing the sliding
window maximum as discussed earlier in this section.  One caveat is
that we need to store separate monotonic edges for the upper and lower
bounds of the \rsis. The algorithm then proceeds upward on the syntax
tree, only updating the worklist of a node only when there is an
update to the worklists of its children.

The second row in each table is the effect of obtaining a new time
point (at time $t_3$) for both signals. Note that this does not affect
$\worklist[\treeNode_{y>0}]$ or $\worklist[\treeNode_{\neg y>0}]$, as
all \rsis are already singular, but does update the \rsi values for
the node $\treeNode_{x>0}$.  The algorithm then invokes
Alg.~\ref{algo:lemire} on $\worklist[\treeNode_{x>0}]$ to update
$\worklist[\treeNode_{\ev_{[b,c]}(x>0)}]$.  Note that in the
invocation on the second row (corresponding to time $t_3$), there is
an additional value in the worklist, at time $t_3$. This leads
Alg.~\ref{algo:lemire} to produce a new value of
$\slidingmax\left(\worklist[\treeNode_{\vx>0}],[b,c]\right)(t_3-b)$,
which is then inserted in $\worklist[\treeNode_{\ev_{[b,c]} x>0}]$.
This leads to additional points appearing in worklists at the
ancestors of this node.

Finally, we remark that the run of this algorithm shows that at time
$t_4$, the \rsi for the formula $\varphi$ is $[-2,-2]$, which yields a negative upper bound, 
showing that the formula is not satisfied irrespective of the suffixes of
$x$ and $y$. In other words, the satisfaction of $\varphi$ is
known before we have all the data required by $\scope(\varphi)$.

\renewcommand*{\arraystretch}{1.3}

\begin{figure*}[!t]
\centering
\begin{tikzpicture}[>=stealth', transform shape, scale=.8]

\draw[densely dotted,very thin,color=gray,step=0.25cm] (0.5,3) grid (7.0,5.75);
\draw[densely dotted,very thin,color=gray,step=0.25cm] (7.5,3) grid (14.0,5.75);

\node[coordinate] (root) at (0,2.5) {};

% Y-axis for x-plot
\draw[->] (1,3.25) -- (1,5.75);
\node[bigtext, anchor=east] at (0.75,5.75) {$x$};

% X-axis for x-plot
\draw[->] (1,4.5) -- (7,4.5);
\node[bigtext, anchor=north] at (7,4.5) {$t$};

% draw the ticks on the Y-axis
\foreach \y in {0,...,4}
    \draw (0.95,3.5+0.5*\y) -- (1.05,3.5+0.5*\y) {};

% legends for Y-axis
\foreach \y in {0,...,4}
    \pgfmathsetmacro\result{\y-2}
    \node[smalltext, fill=white, inner sep=1pt,  anchor=east] at (0.95,3.5+0.5*\y) 
    {\pgfmathprintnumber{\result}};

\draw[draw=brown,densely dotted,thick] (6.5,4) -- (7,4);
\draw[densely dotted,draw=red,thick] (6.5,5.5) -- plot[valr] (6.5,4);
\draw[draw=red,thick] (5.75,5.5) -- (6.5,5.5);
\draw[densely dotted,draw=black!60!green,thick] (5.75,3.5)  -- plot[valg] (5.75,5.5);
\draw[draw=black!60!green,thick] (4.25,3.5) -- (5.75,3.5);
\draw[densely dotted,thick,draw=blue] (4.25,4) -- plot[valb] (4.25,3.5);
\draw[draw=blue,thick]      (3,4)       -- (4.25,4);
\draw[densely dotted,thick] (3,5.5)     -- plot[val] (3,4);
\draw[thick]                (2,5.5)     -- (3,5.5);
\draw[densely dotted,thick] (2,5)       -- plot[val] (2,5.5);
\draw[thick] plot[val]      (1,5)       -- (2,5);

\draw[fill=black] (1,4.5) circle (0.4mm); 
\node[normaltext,anchor=north west] at (1,4.5) {$t_0$};

\draw[fill=blue,draw=blue] (1.75,4.5) circle (0.4mm); 
\draw[->,draw=blue] (1.75,3.75) -- (1.75,4.45);
\node[normaltext,anchor=north] (l1) at (1.75,3.9) {\bl{${t_3-b}$}};

\draw[fill=black] (2,4.5) circle (0.4mm); 
\node[normaltext, anchor=north] at (2,4.5) {$t_1$};

\draw[fill=black!60!green,draw=black!60!green] (2.25,4.5) circle (0.4mm);
\draw[->,draw=black!60!green] (2.25,3.5) -- (2.25,4.45);
\node[normaltext,anchor=north] (l1) at (2.25,3.65) {\gr{$t_4-c$}};

\draw[fill=white,draw=black] (2.5,4.5) circle (0.4mm);
\node[normaltext, anchor=south] at (2.5,4.5) {$a$};

\draw[fill=black] (3,4.5) circle (0.4mm);
\node[normaltext,anchor=north west] at (3,4.5) {$t_2$};

\draw[fill=white,draw=black] (3.5,4.5) circle (0.4mm);
\node[normaltext, anchor=south] at (3.5,4.5) {$b$};

\draw[fill=blue,draw=blue] (4.25,4.5) circle (0.4mm);
\node[normaltext, anchor=north east] at (4.25,4.5) {\bl{$t_3$}};

\draw[fill=white,draw=black] (4.5,4.5) circle (0.4mm);
\node[normaltext, anchor=south] at (4.5,4.5) {$c$};

\draw[fill=black!60!green,draw=black!60!green] (5.75,4.5) circle (0.4mm);
\node[normaltext, anchor=north east] at (5.75,4.5)  {\gr{$t_4$}};

\draw[fill=white,draw=black] (6,4.5) circle (0.4mm);
\node[normaltext, anchor=240] at (6,4.5) {$\mathbf{a+c}$};

\draw[fill=red,draw=red] (6.5,4.5) circle (0.4mm);
\node[normaltext, anchor=north east] at (6.5,4.5) {\rd{$t_5$}};

% for the y plot
% Y-axis for y-plot
\draw[->] (8,3.25) -- (8,5.75);
\node[bigtext, anchor=east] at (7.75,5.75) {$y$};

% X-axis for y-plot
\draw[->] (8,4.5) -- (14,4.5);
\node[bigtext, anchor=north] at (14,4.4) {$t$};

\foreach \y in {0,...,4}
    \draw(7.95,3.5+0.5*\y) -- (8.05,3.5+0.5*\y) {};

\foreach \y in {0,...,4}
    \pgfmathsetmacro\result{\y-2}
    \node[smalltext, fill=white, inner sep=1pt, anchor=east] at (7.95,3.5+0.5*\y) 
    {\pgfmathprintnumber{\result}};

\draw[draw=brown,densely dotted,thick] (13.5,5) -- (14,5);
\draw[draw=red,thick]                 (12.75,5) -- plot[valr] (13.5,5);
\draw[draw=black!60!green,thick]      (11.25,5) -- plot[valg] (12.75,5);
\draw[draw=blue,thick,densely dotted] (11.25,4) -- plot[valb] (11.25,5);
\draw[draw=blue,thick]                (10,4) -- (11.25,4);
\draw[thick,densely dotted]           (10,5.5) -- plot[val] (10,4);
\draw[thick]                          (9,5.5) -- (10,5.5);
\draw[thick,densely dotted]           (9,4)  -- plot[val] (9,5.5);
\draw[thick] plot[val]                (8,4)  -- (9,4);        

\draw[fill=black] (8,4.5) circle (0.4mm); 
\node[normaltext,anchor=north west] at (8,4.5) {$t_0$};

\draw[fill=blue,draw=blue] (8.75,4.5) circle (0.4mm); 
\draw[->,draw=blue] (8.75,5.75) -- (8.75,4.55);
\node[normaltext,anchor=south] (l1) at (8.75,5.7) {\bl{${t_3-b}$}};

\draw[fill=black] (9,4.5) circle (0.4mm); 
\node[normaltext, anchor=60] at (9,4.5) {$t_1$};

\draw[fill=black!60!green,draw=black!60!green] (9.25,4.5) circle (0.4mm);
\draw[->,draw=black!60!green] (9.25,3.5) -- (9.25,4.45);
\node[normaltext,anchor=north] (l1) at (9.25,3.65) {\gr{$t_4-c$}};

\draw[draw=black,fill=white] (9.5,4.5) circle (0.4mm);
\node[normaltext, anchor=south] at (9.5,4.5) {$a$};

\draw[fill=black] (10,4.5) circle (0.4mm);
\node[normaltext,anchor=north west] at (10,4.5) {$t_2$};

\draw[draw=black,fill=white] (10.5,4.5) circle (0.4mm);
\node[normaltext, anchor=south] at (10.5,4.5) {$b$};

\draw[fill=blue,draw=blue] (11.25,4.5) circle (0.4mm);
\node[normaltext, anchor=north east] at (11.25,4.5) {\bl{$t_3$}};

\draw[draw=black,fill=white] (11.5,4.5) circle (0.4mm);
\node[normaltext, anchor=south] at (11.5,4.5) {$c$};

\draw[fill=black!60!green,draw=black!60!green] (12.75,4.5) circle (0.4mm);
\node[normaltext, anchor=north east] at (12.75,4.5)  {\gr{$t_4$}};

\draw[draw=black,fill=white] (13,4.5) circle (0.4mm);
\node[normaltext, anchor=240] at (13,4.5) {$\mathbf{a+c}$};

\draw[fill=red,draw=red] (13.5,4.5) circle (0.4mm);
\node[normaltext, anchor=north] at (13.5,4.5) {\rd{$t_5$}};

\end{tikzpicture}
\caption{These plots show the signals $x(t)$ and $y(t)$.  Each signal
begins at time $t_0 = 0$, and we consider three partial signals:
$\pvx{3}$ (black + blue), and $\pvx{4}$ ($\pvx{3}$ + green), and
$\pvx{5}$ ($\pvx{4}$ + red). \label{fig:plots}}
\end{figure*}
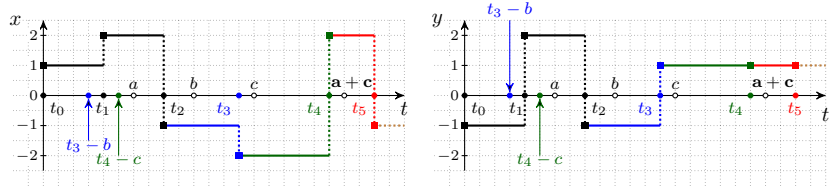

\begin{figure*}[t!]
\centering
\begin{tikzpicture}[>=stealth', transform shape, scale=.85]

\node[block] (root2) at (4,-1) {$\Box_{[0,a]}$};
\node[coordinate] (c) at (4,-1) {};
\node[block,node distance=15mm,below of=c] (n02) {$\bigvee$};
\node[coordinate,node distance=20mm,below of=n02] (c2) {};
\node[block,node distance=10mm,right of=c2] (n12) {$\neg$};
\node[block,node distance=10mm,left of=c2] (n22) {$\Diamond_{[b,c]}$}; 
\node[block,node distance=20mm,below of=n12] (n32) {$y>0$}; 
\node[block,node distance=20mm,below of=n22] (n42) {$x>0$};
\draw[->] (root2) to (n02);
\draw[->] (n02) to (n12);
\draw[->] (n02) to (n22);
\draw[->] (n12) to (n32);
\draw[->] (n22) to (n42);

\newcommand{\minusone}{\text{-1}}
\newcommand{\minustwo}{\text{-2}}

% (y>0)
\node[smalltext,anchor=west] (n32t) at (5.5,-6.7) 
{   %  \begin{tabular}{l:ccc}
    $\begin{array}{l:ccc}
      \hline
                &  t_0 = 0  & t_1  & a   \\
      \hline 
      t_2     &   [\minusone,\minusone] &  [2,2]   &  [2,2] \\
      \hdashline[0.4pt/1pt]
      t_3     &   [\minusone,\minusone] &  [2,2]   &  [2,2] \\
      \hdashline[0.4pt/1pt]
      t_4     &   [\minusone,\minusone] &  [2,2]   &  [2,2] \\
      \hdashline[0.4pt/1pt]
      t_5     &   [\minusone,\minusone] &  [2,2]   &  [2,2] \\
      \hline
    \end{array}$
      % \end{tabular} 
};

% (x>0)
\node[smalltext,anchor=east] (n42t) at (2.5,-6.7) 
    {
      $\begin{array}{l:cccc}
      \hline
              &  b           & t_3                   & t_4   & a\!+\!c \\
      \hline 
      t_2     &  \trivial             & \dada                   & \dada   & \trivial \\
      \hdashline[0.4pt/1pt]
      t_3     & \bl{[\minusone,\minusone]} & \bl{[\minustwo,\minustwo]} & \dada   & \trivial \\
      \hdashline[0.4pt/1pt]
      t_4     & [\minusone,\minusone] & [\minustwo,\minustwo] & \gr{[2,2]} & \trivial \\
      \hdashline[0.4pt/1pt]
      t_5     & [\minusone,\minusone] & [\minustwo,\minustwo] & [2,2] & \rd{[2,2]} \\
      \hline
      \end{array}$
    };

% not (y>0)
\node[smalltext,anchor=west] (n12t) at (5.5,-4.5)
    {$\begin{array}{l:ccc}
      \hline
               & t_0 = 0    & t_1                   & a \\
      \hline 
      t_2      & [1,1]      & [\minustwo,\minustwo] & [\minustwo,\minustwo] \\
      \hdashline[0.4pt/1pt]                
      t_3      & [1,1]      & [\minustwo,\minustwo] & [\minustwo,\minustwo] \\
      \hdashline[0.4pt/1pt]                       
      t_4      & [1,1]      & [\minustwo,\minustwo] & [\minustwo,\minustwo] \\
      \hdashline[0.4pt/1pt]                       
      t_5      & [1,1]      & [\minustwo,\minustwo] & [\minustwo,\minustwo] \\
      \hline
      \end{array}$
      };

% ev_[b,c] (x > 0)
\node[smalltext,anchor=east] (n22t) at (2.5,-4.5)
    {
     $\begin{array}{l:cccc}
      \hline
                 &  t_0 = 0                    & t_3\text{-}b                  & t_4\text{-}c   & a \\
      \hline 
      t_2        & \trivial                    &  \dada                          & \dada            & \trivial \\
      \hdashline[0.4pt/1pt]
      t_3        & \bl{[\minusone,\vxu]}       & \bl{[\minustwo,\vxu]}         & \dada            & \trivial \\
      \hdashline[0.4pt/1pt]
      t_4        & \gr{[\minusone,\minusone]}  & \gr{[\minustwo,\minustwo]}    & \gr{[2,2]}     & \gr{[2,\vxu]} \\
      \hdashline[0.4pt/1pt] 
      t_5        & [\minusone,\minusone]       & [\minustwo,\minustwo]         & [2,2]          & \rd{[2,2]} \\
      \hline
      \end{array}$
    };
      
% not (y>0) or ev_[b,c] (x>0)
\node[smalltext,anchor=west] (n02t) at (5.0,-2.4)
    {$\begin{array}{l:ccccc}
      \hline
                &  t_0 = 0          & t_3\text{-}b     & t_1          & t_4\text{-}c  & a  \\
      \hline
      t_2       &  [1,\vxu]         & \dada              & [-2,\vxu]    & \dada           & \trivial    \\
      \hdashline[0.4pt/1pt]
      t_3       &  \bl{[1,\vxu]}    & \bl{[1,\vxu]}    & [-2,\vxu]    & \dada           & \trivial    \\
      \hdashline[0.4pt/1pt]
      t_4       &  \gr{[1,1]}       & \gr{[1,1]}       & \gr{[\minustwo,\minustwo]} & \gr{[2,2]}   & \gr{[2,\vxu]} \\
      \hdashline[0.4pt/1pt]
      t_5       & [1,1]             & [1,1]            & [\minustwo,\minustwo]      & [2,2]        & \rd{[2,2]} \\
      \hline
      \end{array}$
    };

% \node[smalltext,node distance=15mm,anchor=west,right of=root2] (root2t)
\node[smalltext,anchor=east] (root2t) at (3.0,-1.4)
    {$\begin{array}{l:c}
      \hline
                &  t_0 = 0 \\
      \hline
      t_2       & \trivial  \\
      \hdashline[0.4pt/1pt] 
      t_3       & \trivial  \\
      \hdashline[0.4pt/1pt]
      t_4       & {\bf \gr{[-2,-2]}} \\
      \hdashline[0.4pt/1pt]
      t_5       & \rd{[-2,-2]} \\
      \hline
      \end{array}$
    };

\end{tikzpicture}

\caption{We show a snapshot of the $\worklist[\treeNode]$ maintained by the
algorithm for four different (incremental) partial traces of the signals
$x(t)$ and $y(t)$.  Each row indicates the state of $\worklist[\treeNode]$
at the time indicated in the first column. An entry marked \text{-}\text{-}
indicates that the corresponding element did not exist in
$\worklist[\treeNode]$ at that time. Each colored entry indicates that the
entry was affected by availability of a signal fragment of the
corresponding color.  \label{fig:worklists}} 
\end{figure*}
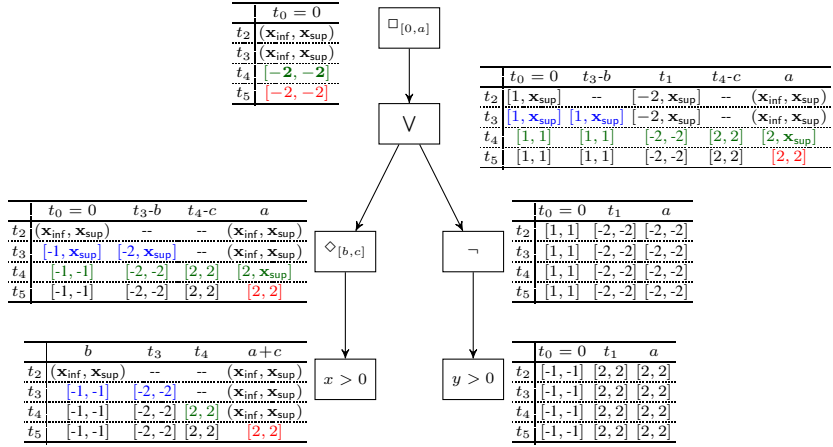

%%% Local Variables:
%%% mode: latex
%%% TeX-master: "../main"
%%% End:

% \subsection{Online \texttt{ Computation}

\begin{algorithm}[t]
\SetKwSwitch{Switch}{Case}{Other}{switch}{do}{case}{otherwise}{end}
%        \tcc{Recursive procedure to update worklist at each node
%             $\treeNode$ in a syntax tree, when new data value $\newValue$
%             at time $\newTime$ becomes available}
\DontPrintSemicolon
% $\procName$($\treeNode_\psi$, $\newTime$, $\newValue$) \{ \\
\tcp{$\treeNode_\psi$ is a node in the syntax tree, $(\newTime,\newValue)$ is a new signal time-point}
\Switch{$\psi$}{
    \Case{$f(\vx) > 0$}{
          \If{$\newTime \in \scope(\treeNode_{\psi})$}{
              $\worklist[\treeNode_{\psi}](\newTime)$ $\assign$ 
              $[f(\newValue),f(\newValue)]$ \nllabel{algoline:l1}
          }
    }
    \Case{$\neg \varphi$}{ 
          $\procName$($\treeNode_{\varphi}$, $\newTime$ ,$\newValue$) 
          \nllabel{algoline:l2} \;
          $\worklist[\treeNode_{\psi}]$ $\assign$ 
            $-\worklist[\treeNode_{\varphi}]$ \nllabel{algoline:l3} 
     } \nllabel{algoline:neg}
     \Case{$\varphi_1 \wedge \varphi_2$}{
          $\procName$($\treeNode_{\varphi_1}$, $\newTime$, $\newValue$)
          \nllabel{algoline:l31} \;
          $\procName$($\treeNode_{\varphi_2}$, $\newTime$, $\newValue$) 
          \nllabel{algoline:l32} \; 
          $\worklist[\treeNode_{\psi}]$ $\assign$
          $\min(\worklist[\treeNode_{\varphi_1}],\worklist[\treeNode_{\varphi_2}])$
          \nllabel{algoline:min}
    }
    \Case{$\alw_I \varphi$}{
          $\procName$($\treeNode_{\varphi}$, $\newTime$ , $\newValue$) \;
          $\worklist[\treeNode_{\psi}]$ $\assign$ $\slidingmax(\worklist[\treeNode_\varphi], I)$ 
    }
    % \Case{$\ev_I \varphi$}{
    %       $\procName$($\treeNode_{\varphi}$, $\newTime$ , $\newValue$) \;
    %       $\worklist[\treeNode_{\psi}]$ $\assign$
    %       $-\slidingmax(-\worklist[\treeNode_\varphi], I)$ %
    %}
}
% \}
% \caption{Online monitoring procedure for Time-bounded STL}
\caption{$\procName$($\treeNode_\psi$, $\newTime$, $\newValue$)} % \{ \\
\label{algo:online_monitoring}
\end{algorithm}

Alg.~\ref{algo:online_monitoring} is essentially a procedure that
recursively visits each node in the syntax tree $\SynTree{\varphi}$ of
the STL formula $\varphi$ that we wish to monitor.
Line~\ref{algoline:l1} corresponds to the base case of the recursion,
\ie when the algorithm visits a leaf of $\SynTree{\varphi}$ or an
atomic predicates of the form $f(\vx) > 0$.  Here, the algorithm
inserts the pair $(\newTime,\newValue)$ in
$\worklist[\treeNode_{f(\vx)>0}]$ if $\newTime$ lies inside
$\scope(\treeNode_{f(\vx)>0})$. In other words, it only tracks a value
if it is useful for the computing the robust satisfaction interval of
some ancestor node.

For a node corresponding to a Boolean operation, the algorithm first
updates the worklists at the children, and then uses them to update
the worklist at the node. If the current node represents $\neg
\varphi$ (Line~\ref{algoline:neg}), the algorithm flips the sign of
each entry in $\worklist[\treeNode_\varphi]$; this operation is
denoted as $-\worklist[\treeNode_\varphi]$.  Consider the case where
the current node $\treeNode_{\psi}$ is a conjunction $\varphi_1 \wedge
\varphi_2$.  The sequence of upper bounds and the sequence of lower
bounds of the entries in $\worklist[\treeNode_{\varphi_1}]$ and
$\worklist[\treeNode_{\varphi_1}]$ can be each thought of as a
piecewise-constant signal (likewise for
$\worklist[\treeNode_{\varphi_2}$). In Line~\ref{algoline:min}, the
algorithm computes a pointwise-minimum over piecewise-constant signals
representing the upper and lower bounds of the \rsis of its arguments.
Note that if for $i=1,2$, if $\worklist[\treeNode_{\varphi_i}]$ has
$N_i$ entries, then the pointwise-$\min$ would have to be performed at
most $N_1 + N_2$ distinct time-points. Thus,
$\worklist[\treeNode_{\varphi_1 \wedge \varphi_2}]$ has at most $N_1 +
N_2$ entries. A similar phenomenon can be seen in
Fig.~\ref{fig:worklists}, where computing a $\max$ over the worklists
of $\treeNode_{\ev_{[b,c](x>0)}}$ and $\treeNode_{\neg (y>0)}$ leads
to an increase in the number of entries in the worklist of the
disjunction.

For nodes corresponding to temporal operators, \eg, $\ev_I \varphi$,
the algorithm first updates $\worklist[\treeNode_\varphi]$. It then
applies Alg.~\ref{algo:lemire} to compute the sliding maximum over
$\worklist[\treeNode_\varphi]$. Note that if
$\worklist[\treeNode_\varphi]$ contains $N$ entries, so does
$\worklist[\treeNode_{\ev_I\varphi}]$.

A further optimization can be implemented on top of this basic scheme.
For a node $\treeNode$ corresponding to the subformula $\tem_I
\varphi$, the first few entries of $\worklist[\treeNode]$ (say up to
time $u$) could become singular intervals once the required \rsis for
$\worklist[\treeNode_\varphi]$ are available. The optimization is to
only compute $\slidingmax$ over $\worklist[\treeNode_\varphi]$
starting from $u + \inf(I)$. We omit the pseudo-code for brevity.

%%% Local Variables: 
%%% mode: latex
%%% TeX-master: "main"
%%% End: 

\section{Monitoring untimed formulas}\label{sec:untimed}
\newcommand{\subf}{\mathsf{sub}}
\newcommand{\psione}{\varphi}
\newcommand{\psitwo}{\psi}
\newcommand{\WLone}[1]{\mathsf{W_\psione}(#1)}
\newcommand{\WLtwo}[1]{\mathsf{W_\psitwo}(#1)}
\newcommand{\Summary}{\mathsf{S}}
\newcommand{\Sone}{\mathsf{S}_\psione}
\newcommand{\Stwo}{\mathsf{S}_\psitwo}

If the STL formula being monitored has untimed (\ie infinite-horizon)
temporal operators, a direct application of
Alg.~\ref{algo:online_monitoring} requires every node in the sub-tree
rooted at the untimed operator to have an unbounded time horizon. In
other words, for all such nodes, the algorithm would have to keep
track of every value over arbitrarily long intervals.  For certain
untimed operators and the combinations thereof, we show that we can
monitor the formulas using only a bounded amount of information. 

First, we introduce some equivalences over intervals $a,b,c$ that we use in
the theorem and the proof to follow:
\begin{eqnarray}
\min(\max(a,b),\max(a,c)) & = & \max(a, \min(b,c))           \label{eq:equivone} \\
\min(a,\max(b,c))         & = & \max( \min(a,b), \min(a,c) ) \label{eq:equivtwo} \\
\max(\max(a,b), c)        & = & \max(a,b,c)                  \label{eq:equivthree} \\
\min(\max(a,b),a)         & = & a                            \label{eq:equivfour}
\end{eqnarray}

\begin{theorem}
\label{thm:finiteness}
For each of the following formulae, where $\varphi$ and $\psi$ are atomic
predicates of the form $f(\vx)\!>\!0$, we can monitor interval robustness in an
online fashion using constant memory: (1) $\alw \varphi$, $\ev \varphi$,
(2) $\varphi \Until \psi$, (3) $\alw(\varphi \vee \ev \psi)$, $\ev(\varphi \wedge
\alw \psi)$, (4) $\alw \ev \varphi$, $\ev \alw \varphi$, and (5)
$\ev(\varphi \wedge \ev \psi)$, $\alw(\varphi \vee \alw \psi)$.
\end{theorem}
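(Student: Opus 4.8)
The plan is to show, for each listed formula, that the robust satisfaction interval $\introb$ on the prefix $\pvx{i}$ can be recovered from a constant-size \emph{summary} (a fixed-length tuple of reals) together with the global bounds $\flo,\fhi$ of the atomic predicates, and that this summary admits an incremental update using only the freshly arrived sample $(\newTime,\newValue)$. Writing $\Phi(\tau)=f_\varphi(\vx(\tau))$ and $\Psi(\tau)=f_\psi(\vx(\tau))$, in each case I would (i) expand $\introb$ into a $\sup$/$\inf$ of nested $\min$/$\max$ expressions, (ii) split every quantifier into its prefix part (over $[t_0,t_i]$) and its suffix part (over the completion $\vx\in\completions(\pvx{i})$), and (iii) compute the extremal suffix contribution in closed form, using that an adversarial infinite completion can drive each predicate arbitrarily close to $\flo$ or $\fhi$. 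This reduces each endpoint of the RoSI to a prefix aggregate combined by $\min$/$\max$ with a constant built from $\flo,\fhi$; the remaining task is to prove those aggregates are maintainable in bounded state.

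For the single-operator cases (1) this is immediate: $\alw\varphi$ needs only the running minimum $m_i=\min_{t\le t_i}\Phi(t)$, yielding $\introb=[\min(m_i,\flo),m_i]$, and $\ev\varphi$ the dual running maximum. Strikingly, the doubly-nested same-type cases (4) $\alw\ev\varphi$ and $\ev\alw\varphi$ collapse to the \emph{constant} interval $[\flo,\fhi]$, since an infinite suffix can realize either extreme regardless of the prefix, so they need no aggregate at all. For $\varphi\Until\psi$ (2) I would carry the pair $(p_i,Q_i)$ with $p_i=\inf_{t\le t_i}\Phi(t)$ and $Q_i=\sup_{\tau_1\le t_i}\min\bigl(\Psi(\tau_1),\inf_{\tau_2<\tau_1}\Phi(\tau_2)\bigr)$, whose updates $Q_{i+1}=\max\bigl(Q_i,\min(\Psi_{i+1},p_i)\bigr)$ and $p_{i+1}=\min(p_i,\Phi_{i+1})$ are clearly constant-state; the lower endpoint is $Q_i$ combined by $\max$ with a $\flo$-constant, and the upper endpoint is $\max(Q_i,\min(\fhi,p_i))$.

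The genuinely delicate cases are (3) and (5), where an unbounded inner $\ev$/$\alw$ sits under an outer one. For $\alw(\varphi\vee\ev\psi)$ the lower endpoint is governed by $L_i=\inf_{\tau\le t_i}\max\bigl(\Phi(\tau),\sup_{\tau'\in[\tau,t_i]}\Psi(\tau')\bigr)$, and the key step is to prove $L_{i+1}=\max\bigl(\Psi_{i+1},\min(L_i,\Phi_{i+1})\bigr)$: this is exactly where identity \eqref{eq:equivone} collapses the two-point interaction into bounded state. Dually, $\ev(\varphi\wedge\ev\psi)$ relies on \eqref{eq:equivtwo} and \eqref{eq:equivthree} to update $R_i=\sup_{\tau\le t_i}\min\bigl(\Phi(\tau),\sup_{\tau'\in[\tau,t_i]}\Psi(\tau')\bigr)$ together with the running supremum of $\Phi$, while the absorption law \eqref{eq:equivfour} discharges the steps in which a dominated prefix aggregate is swallowed by a global-bound constant. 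The second formula in each pair I would obtain by duality: since $\introb(\neg\varphi,\pvx{i},\tau)=-\introb(\varphi,\pvx{i},\tau)$ and negation swaps $\alw\leftrightarrow\ev$, $\wedge\leftrightarrow\vee$ and $\min\leftrightarrow\max$, the monitor for, e.g., $\ev(\varphi\wedge\alw\psi)$ is the negation-image of that for $\alw(\varphi\vee\ev\psi)$. I expect the main obstacle to be the rigorous justification of the closed forms for the suffix extrema over \emph{infinite} completions --- attainment of $\flo,\fhi$ and the open-versus-closed endpoint bookkeeping --- and verifying that the recurrences for $L_i$ and $R_i$ remain correct for every ordering of incoming samples; the identities \eqref{eq:equivone}--\eqref{eq:equivfour} are precisely the algebraic engine that makes these recurrences close.
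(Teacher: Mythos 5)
Your proposal is correct and shares the paper's algebraic skeleton: expand the nested $\min$/$\max$ expression, isolate prefix-computable aggregates, and close them under arrival of a new sample via the identities \eqref{eq:equivone}--\eqref{eq:equivfour}, handling the second formula of each pair by duality. Your recurrences are in fact the paper's: $L_{i+1}=\max(\Psi_{i+1},\min(L_i,\Phi_{i+1}))$ is its $T_{n+1}=\dmax(q_{n+1},\dmin(p_{n+1},T_n))$; your pair $(p_i,Q_i)$ for $\varphi\Until\psi$ is its $(M_n,U_n)$ (your update of the running minimum is the cleaner one --- the paper's $M_{n+1}=\dmin(M_n,p_{n+1},q_{n+1})$ spuriously folds in $q_{n+1}$); and your case-(5) pair is its $(T_n,M_n)$. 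The genuine divergence is the treatment of the unobserved suffix. The paper expands every quantifier over the observed indices $[0,n+1]$ only, never introducing $[\flo,\fhi]$ terms for future times, so what it actually computes is the robustness of the truncated trace (a ``neutral-view'' value), not the completion-based \rsi of Definition~\ref{def:rsi}; you instead fold the suffix in explicitly as $\flo$/$\fhi$ constants. For cases (1), (2), (3), (5) this only clamps interval endpoints by constants and the identical summaries suffice, but for case (4) it is decisive: you obtain the permanently trivial interval $[\flo,\fhi]$ --- no finite prefix constrains an untimed $\alw\ev$ --- whereas the paper derives $p_{n+1}$ by absorption. Under the paper's own definitions your answer is the faithful one (the paper's amounts to treating $t_{n+1}$ as the end of the trace); both require zero memory, so the theorem's claim holds on either reading. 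What each route buys: the paper's value is nontrivial and practically informative, and is what its Theorem~\ref{thm:finiteness2} later reuses; yours is the semantically exact \rsi, at the price of exposing that for $\alw\ev\varphi$ and $\ev\alw\varphi$ that interval stays uninformative until the trace is complete.
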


\begin{proof} 
In what follows, we use the following short-hand notation:
\begin{equation}
\label{eq:pq}
\begin{array}{lll@{\hspace{2em}}lll}
p_i & \equiv & \introb(f(\vx)\!>\!0, \pvx{n+1},t_i) &
q_i & \equiv & \introb(g(\vx)\!>\!0, \pvx{n+1},t_i) 
\end{array}
\end{equation}
\noindent Note that if $i \in [0,n]$, then $p_i$ is the same over the
partial signal $\pvx{n}$, \ie, $p_i = \introb(f(\vx)\!>\!0, \pvx{n},
t_i)$ (and respectively for $q_i$). We will use this equivalence in
several of the steps in what follows.

\noindent \textbf{(1)} $\alw \varphi$, where $\varphi \equiv f(\vx) > 0$. Observe the following:
\begin{equation}
\begin{array}{ll}
\introb(\varphi, \pvx{n+1}, 0) 
& = \dmin_{i \in [0,n+1]} p_i 
 = \dmin \left( \dmin_{i \in [0,n]} p_i,\  p_{n+1} \right) 
\end{array}
\end{equation}
In the final expression above, observe that the first entry does not
contain any $p_{n+1}$ terms, \ie, it can be computed using the data
points $\vx_1, \ldots, \vx_n$ in the partial signal $\pvx{n}$ itself.
Thus, for all $n$, if we maintain the one interval representing the
$\dmin$ of the first $n$ values of $f(\vx)$ as a {\em summary}, then
we can compute the interval robustness of $\alw(f(\vx)\!>\!0)$ over
$\pvx{n+1}$ with the additional data $\vx_{n+1}$ available at
$t_{n+1}$. Note for the dual formula $\ev(f(\vx)\!>\!0)$, a similar
result holds with $\dmin$ substituted by $\dmax$.

\noindent \textbf{(2)} $\varphi \Until \psi$, 
where $\varphi \equiv
f(\vx)\!>\!0$, and $\psi \equiv g(\vx)\!>\!0$. 
Observe the following:
\begin{equation}
\label{eq:until_introb}
\introb(\varphi \Until \psi, \pvx{n+1}, 0)  = \dmax_{i \in [0,n+1]} \dmin(q_i, \dmin_{j\in[0,i]} p_j)
\end{equation}
\noindent We can rewrite the RHS of Eq.~\eqref{eq:until_introb} to get:
\begin{equation}
\dmax\left( \underline{\dmax_{i\in[0,n]} \dmin\left(q_i, \dmin_{j\in[0,i]} p_j\right)},\quad
          \dmin\left( \underline{\dmin_{j\in[0,n]} p_j},\ p_{n+1}, q_{n+1} \right) \right) 
\end{equation}
\noindent Let $U_n$ and $M_n$ respectively denote the first and second
underlined terms in the above expression.  Note that for any $n$,
$U_n$ and $M_n$ can be computed only using data $\vx_1,\ldots,\vx_n$.
Consider the recurrences $M_{n+1} = \dmin(M_n, p_{n+1}, q_{n+1})$ and
$U_{n+1} = \dmax(U_n, M_{n+1})$; we can observe that to compute
$M_{n+1}$ and $U_{n+1}$, we only need $M_n$, $U_n$, and $\vx_{n+1}$.
Furthermore, $U_{n+1}$ is the desired interval robustness value over
the partial signal $\pvx{n+1}$.  Thus storing and iteratively updating
the two interval-values $U_n$ and $M_n$ is enough to monitor the given
formula.

\noindent \textbf{(3)} $\alw(\varphi \vee \ev \psi)$, where $\varphi
\equiv f(\vx)\!>\!0$, and $\psi \equiv g(\vx)\!>\!0$.  Observe the
following:
\begin{equation}
\label{eq:alwevone}
\begin{array}{ll}
\introb(\alw(\varphi \vee \ev \psi), \pvx{n+1},0) & 
    = \dmin_{i\in[0,n+1]} \dmax\left(p_i, \dmax_{j\in[i,n+1]} q_j\right)   \\
&   = \dmin_{i\in[0,n+1]} \dmax\left(p_i, \dmax_{j\in[i,n]} q_j, q_{n+1}\right) 
\end{array}
\end{equation}
\noindent Repeatedely applying the equivalence \eqref{eq:equivone}
to the outer $\min$ in \eqref{eq:alwevone} we get:
\begin{equation}
\dmax\left(q_{n+1}, \dmin_{i\in[0,n+1]} \dmax\left(p_i, \dmax_{j\in[i,n]}q_j\right)\right)
\end{equation}
\noindent The inner $\min$ simplifies to:
\begin{equation}
\dmax\left(q_{n+1}, \dmin\left(p_{n+1}, \underline{\dmin_{i\in[0,n]}\left(\dmax\left(p_i, \dmax_{j\in[i,n]} q_j\right)\right)}\right)\right)
\end{equation}
Let $T_n$ denote the underlined term; note that we do not require any
data at time $t_{n+1}$ to compute it. Using the recurrence $T_{n+1} =
\dmax\left(q_{n+1}, \dmin\left(p_{n+1}, T_n\right)\right)$, we can obtain
the desired interval robustness value. The memory required is that for
storing the one interval value $T_n$. A similar result can be established
for the dual formula $\ev(f(\vx)\!>\!0 \wedge \alw (g(\vx)\!>\!0))$.

\noindent \textbf{(4)} $\alw\ev(\varphi)$, where $\varphi \equiv
f(\vx)\!>\!0$. Observe the following:
\begin{equation}
\introb(\alw\ev(\varphi,\pvx{n+1},0) = \dmin_{i \in [0,n+1]} \dmax_{j \in [i,n+1]} p_j 
\end{equation}
Rewriting the outer $\min$ operator and the inner $\max$ more
explicitly, we get:
\begin{equation}
\dmin\left(\underline{\dmin_{i \in [0,n]} \dmax\left(\dmax_{j\in [i,n]} p_j, p_{n+1}\right)}, \quad p_{n+1}\right) 
\end{equation}
Repeatedly using \eqref{eq:equivone} to simplify the above underlined term
we get:
\begin{equation}
\dmin\left(\dmax\left(p_{n+1}, \dmin_{i\in[0,n]} \dmax_{j\in[i,n]}
p_j\right), p_{n+1}\right)  = p_{n+1}.
\end{equation}

The simplification to $p_{n+1}$, follows from \eqref{eq:equivfour}.  Thus,
to monitor $\alw\ev(f(\vx)\!>\!0)$, we do not need to store any
information, as the interval robustness simply evaluates to that of the
predicate $f(\vx)\!>\!0$ at time $t_{n+1}$. A similar result can be
obtained for the dual formula $\ev \alw (f(\vx)\!>\!>0)$.

\noindent \textbf{(5)} $\ev(\varphi \wedge \ev (\psi))$, where
$\varphi \equiv f(\vx)\!>\!0$ $\psi \equiv \ev(g(\vx)\!>\!0))$.
Observe the following:
\begin{equation}
\label{eq:evev}
\introb(\ev( \varphi \wedge \ev(\psi)), \pvx{n+1}, 0) =
\dmax_{i\in[0,n+1]}\left(\dmin\left(p_i, \dmax_{j\in[i,n+1]} q_j\right)\right)
\end{equation}
\noindent We can rewrite the RHS of Eq.~\eqref{eq:evev} as the first
expression below. Applying the equivalence in \eqref{eq:equivtwo} and
\eqref{eq:equivthree} to the expression on the left, we get the expression on
the right.
\begin{equation}
\label{eq:evevone}
\begin{array}{ll}
\dmax\left(
\begin{array}{l}
\min\left( p_0, \max\left(q_0, \ldots, q_{n+1}\right) \right) \\
\cdots\\
\min\left( p_n, \max\left(q_n, q_{n+1}\right) \right) \\
\min\left( p_{n+1}, q_{n+1} \right) \\
\end{array}
\right) & = 
\dmax\left(
\begin{array}{l}
\min(p_0,q_0), \ldots, \min(p_0,q_{n+1}), \\
\cdots \\
\min(p_n,q_n), \min(p_n, q_{n+1}), \\
\min( p_{n+1}, q_{n+1} )
\end{array}
\right)
\end{array}
\end{equation}
Grouping terms containing $q_{n+1}$ together and applying the equivalence
in \eqref{eq:equivtwo} we get:
\begin{equation}
\max\left(
\begin{array}{l}
\max\left(
\begin{array}{l}
\min(p_0,q_0), \min(p_0,q_1), \ldots, \min(p_0,q_n), \\
\min(p_1,q_1), \ldots, \min(p_1,q_n), \\
\cdots \\
\min(p_n,q_n)
\end{array}
\right), \\
\min(q_{n+1}, \underline{\max(p_0,p_1,\ldots,p_n)}),\\
\min(p_{n+1}, q_{n+1})
\end{array}
\right)
\end{equation}
\noindent Observe that the first argument to the outermost $\max$ can be
computed using only $\vx_1,\ldots,\vx_n$. Suppose we denote this term
$T_n$. Also note that in the second argument, the inner $\max$ (underlined)
can be computed using only $\vx_1,\ldots,\vx_n$. Let us denote this term by
$M_n$. We now have a recurrence relations:
\begin{eqnarray}
M_{n+1} & = & \max(M_n, p_{n+1}), \\
T_{n+1} & = & \max(T_n, \min(q_{n+1}, M_n), \min(q_{n+1}, p_{n+1})),
\end{eqnarray}
\noindent where $T_0$ = $\min(p_0,q_0)$ and $M_0$ = $p_0$. Thus, the
desired interval robustness can be computed using only two values stored in
$T_n$ and $M_n$. The dual result holds for the formula 
$\alw(\varphi \vee \alw (\psi))$.
\end{proof}

\noindent{\em Remarks on extending above result:}
The result in Theorem~\ref{thm:finiteness} can be generalized to allow
$\varphi$ and $\psi$ that are not atomic predicates, under following two
conditions:
\begin{enumerate}
\item
Bounded horizon subformulae condition: For each formula, the subformulae
$\varphi$ and $\psi$ have a bounded time-horizon, \ie, $\scope(\varphi)$
and $\scope(\psi)$ are closed intervals.
\item
Smallest step-size condition: Consecutive time-points in the signal
are at least $\Delta$ seconds apart, for some finite $\Delta$, which is
known {\em a priori}.
\end{enumerate}

\subsection{Generalizing Theorem~\ref{thm:finiteness}}

Let $\subf(\varphi)$ denote the set of
all subformulas of $\varphi$ except $\varphi$ itself. Let $\last(\varphi)$
be defined as follows:
\begin{equation}
\label{eq:last}
\last(\varphi) \triangleq \displaystyle \max_{\psi \in \subf(\varphi)} \sup(\scope(\psi))
\end{equation}

The meaning of $\last(\varphi)$ is as follows: the last time at which a
data value of $\vx$ is required to compute $\rob(\varphi,\vx,t)$, is
$t+\last(\varphi)$.  For the formula $\varphi$ defined in
Eq.~\eqref{eq:runf}, $\last(\varphi) = a+c$. For the formula $\psi \equiv
\alw (x>0)$, $\last(\psi) = \infty$.  In general, for any untimed formula
$\varphi$, $\last(\varphi)$ is equal to $\infty$.  In
Theorem~\ref{thm:finiteness}, we show that certain classes of untimed
formulas can be monitored in an online fashion with bounded amount of
memory.  We first define the following quantities:
\begin{equation}
\begin{array}{l@{\hspace{2em}}l@{\hspace{2em}}l}
\Delta \triangleq \displaystyle\min_{i\ge0} (t_{i+1} - t_i) &
w_\varphi \triangleq \displaystyle\max_{\psi\in\subf(\varphi)} \last(\psi) &
k_\varphi \triangleq \left\lceil\frac{w_\varphi}{\Delta}\right\rceil.
\end{array}
\end{equation}

Here, $\Delta$ represents the smallest time-step in the monitored signal,
$w_\varphi$ is the largest time horizon of all subformulas of $\varphi$,
and $k_\varphi$ is the largest number of discrete time-points for the trace
in any $w_\varphi$ interval.

\newcommand{\WL}[1]{\mathsf{W}(#1)}
\newcommand{\WLprime}[1]{\mathsf{W}'(#1)}
\begin{figure*}[!t]
\centering

\begin{tikzpicture}[>=stealth']
\tikzstyle{point}=[circle,fill=black,inner sep=0pt,minimum width=1mm]

\draw[densely dotted,very thin,color=gray,step=0.25cm] (0,1) grid (9,5);
\node[point] (n1) at (0,4.5) {};
\node[point,label=above:$t_{s_i}$] (n3) at (1,4.5) {};
\node[point,label=above:$t_{s_i+1}$] (n4) at (2.5,4.5) {};
\node[point] (n5) at (4,4.5) {};
\node[point] (n6) at (5.5,4.5) {};
\node[point,label=above:$t_i$] (n6) at (7,4.5) {};
\node[point,label=above:$t_{i+1}$] (n7) at (8.5,4.5) {};
\draw[draw=black,arrows={<->}] (2,4.25) -- node[smalltext,below] {$w$} (7,4.25);
\draw[very thin] (2,3.75) -- (2,4.5);
\draw[very thin] (7,3.75) -- (7,4.5);

\node[anchor=west,smalltext] (wl1) at (-1,3.75){Time $t_i$:};
\node[anchor=west,smalltext] (wl2) at (-1,3.5) {$\worklist[\treeNode_\psi]\!= $};
\node[smalltext,anchor=south] (xll) at (1,3.6) {$\Summary$};
\node[smalltext] (wll) at (1,3.5) {$[5,5]$};
\node[smalltext,anchor=250] (xl1) at (2.5,3.6) {$\WL{s_i\!+\!1}$};
\node[smalltext] (wl1) at (2.5,3.5) {$[3,\fhi]$};
\node[smalltext,anchor=south] (xl2) at (4,3.6) {$\WL{s_i\!+\!2}$};
\node[smalltext] (wl2) at (4,3.5) {$[2,\fhi]$};
\node[smalltext,anchor=south] (xl3) at (5.5,3.6) {$\WL{s_i\!+\!3}$};
\node[smalltext] (wl3) at (5.5,3.5) {$[1,\fhi]$};
\node[smalltext,anchor=210] (xl4) at (7,3.6) {$\begin{array}{l}\WL{t_i} =\\ \WL{s_i\!+\!4}\end{array}$};
\node[smalltext] (wl4) at (7,3.5) {$[-1,\fhi]$};

\begin{pgfonlayer}{background}
    \node[rectangle,inner sep=-2pt,draw=black!60!green,fill=green!10,fit=(wll) (wl1) (wl2) (wl3) (wl4)] {};
\end{pgfonlayer}

\draw[draw=black,arrows={<->}] (3.5,3) -- node[smalltext,below] {$w$} (8.5,3);
\draw[semithick,densely dotted] (3.5,2.75) -- (3.5,4.5);
\draw[semithick,densely dotted] (8.5,2.75) -- (8.5,4.5);

\node[anchor=west,smalltext] (wl3) at (-1,2.25){Time $t_{i+1}$:};
\node[anchor=west,smalltext] (wl4) at (-1,2) {$\worklist[\treeNode_\psi]\!= $};
\node[smalltext,draw=red,cross out] (wl1l) at (1,2) {$[5,5]$};
\node[smalltext,anchor=90] (xl1) at (2.5,1.8) {$\Summary$};
\node[smalltext] (wl11) at (2.5,2) {$[3,3]$};
\node[smalltext,anchor=90] (xl2) at (4,1.8) {$\WLprime{s_{i\!+\!1} \!+\!1}$};
\node[smalltext] (wl12) at (4,2) {$[2,\fhi]$};
\node[smalltext,anchor=90] (xl3) at (5.5,1.8) {$\WLprime{s_{i\!+\!1} \!+\!2}$};
\node[smalltext] (wl13) at (5.5,2) {$[2,\fhi]$};
\node[smalltext,anchor=90] (xl4) at (7,1.8) {$\WLprime{s_{i\!+\!1} \!+\!3}$};
\node[smalltext] (wl14) at (7,2) {$[2,\fhi]$};
\node[smalltext,ellipse,draw=black!60!green,thick,densely dotted,inner sep=2pt] (wl15) at (8.5,2) {$[2,\fhi]$};
\node[smalltext,anchor=90] (xl5) at (8.5,1.8) 
    {$\WLprime{t_{i\!+\!1}}$};

\node[smalltext,text=black!60!green] (u1) at (2.5,2.5) {$[3,3]$};
\draw[draw=black!60!green,<->] (2.2,2.5) -- node[smalltext,sloped,text=black!60!green,below] (m) {$\min$} (1,3.3);
\draw[->,draw=black!60!green] (m) to[out=270,in=180] (wl11);
\draw[-latex,draw=black!60!green] (wl1) -- (u1);

\begin{pgfonlayer}{background}
    \node[rectangle,inner sep=-2pt,draw=black!60!green,fill=green!10,fit=(wl11) (wl12) (wl13) (wl14) (wl15)] {};
\end{pgfonlayer}

% \node[point] (n1) at (0,3.5) {};
% \node[point,label=above:$t_\ell$] (n3) at (1,3.5) {};
% \node[point,label=above:$t_{\ell+1}$] (n4) at (3.5,3.5) {};
% \node[point] (n5) at (5,3.5) {};
% \node[point] (n6) at (6,3.5) {};
% \node[point,label=above:$t_i$] (n6) at (7.5,3.5) {};
% \node[point,label=above:$t_{i+1}$] (n7) at (9,3.5) {};

          % -- plot[point] {} (5,0) -- plot[point] (5.5,0) -- plot[point] (6.5,0);

\end{tikzpicture}

\caption{A depiction of the action of the procedure to update the summary
        while computing $\introb(\alw\psi,\pvx{i},t_0)$. Here, $\WL{j}$ is
        shorthand for $\introb(\psi,\pvx{i},t_j)$ and $\WLprime{j}$ is
        shorthand for $\introb(\psi,\pvx{i+1},t_j)$.
\label{fig:untimed_alw}}
\end{figure*}

%%% Local Variables:
%%% mode: latex
%%% TeX-master: "../main"
%%% End:

\begin{theorem}
\label{thm:finiteness2}
If $w_\varphi$ is finite, then for each $\varphi$ listed below, we can
monitor \rsi of $\varphi$ in an online fashion using
$O(k_\varphi)$ memory.
\begin{equation}
\begin{array}{l@{\hspace{3em}}l}
1.~\alw \psi \text{ (dually $\ev \psi$)}          & 2.~\psione \Until \psitwo \\
3.~\alw \ev \psi \text{ (dually $\ev \alw \psi$)} & 4.~\alw (\psione \vee \ev \psitwo) \text{(dually $\ev (\psione \wedge \alw \psitwo)$)}, \\
5.~\ev(\psione \wedge \ev \psitwo) \text{ (dually $\alw(\psione \vee \ev \psitwo)$ } &  
\end{array}
\end{equation}
\end{theorem}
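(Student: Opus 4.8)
The plan is to reduce Theorem~\ref{thm:finiteness2} to Theorem~\ref{thm:finiteness} by treating each bounded-horizon subformula $\psione$ or $\psitwo$ as a \emph{generalized predicate} whose robust satisfaction interval is supplied by the bounded-horizon online algorithm of Section~\ref{sec:algo}. Concretely, I keep the short-hand $p_i \equiv \introb(\psione,\pvx{n},t_i)$ and $q_i \equiv \introb(\psitwo,\pvx{n},t_i)$ exactly as in the proof of Theorem~\ref{thm:finiteness}, and observe that every algebraic identity used there --- \eqref{eq:equivone}--\eqref{eq:equivfour} and the resulting recurrences for the summaries $T_n$, $M_n$, $U_n$ --- is stated as an equivalence \emph{over intervals}, hence remains valid verbatim when $p_i,q_i$ are arbitrary RoSIs rather than singular intervals. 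Thus the outer untimed structure of each of the five formula families can still be monitored with a constant number of summary intervals.

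The single place where the atomic assumption was exploited is that $p_i$ and $q_i$ were \emph{finalized} (singular) as soon as the sample at $t_i$ arrived, so the recurrences could consume them immediately. When $\psione,\psitwo$ are merely bounded-horizon, $p_i$ is still an interval that keeps shrinking until enough future data is present; by the definition of $\last$ it becomes singular exactly once $t_i+\last(\psione)\le t_n$, and likewise for $q_i$ with $\last(\psitwo)$. Hence at partial signal $\pvx{n}$ the set of indices whose values are not yet final is $\{\,i : t_i > t_n - w_\varphi\,\}$, and the Smallest step-size condition $t_{i+1}-t_i\ge\Delta$ bounds the cardinality of this \emph{pending window} by $k_\varphi=\lceil w_\varphi/\Delta\rceil$. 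I would therefore split the computation into a finalized part, folded into the constant-size summaries via the Theorem~\ref{thm:finiteness} recurrences as soon as each index matures, and a pending part held in a worklist of at most $k_\varphi$ interval entries. Since $t_i+\last(\psione)$ is monotone in $i$, indices mature in the same increasing order in which the recurrences process them, so the folding can be done incrementally.

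Next I would show that the current RoSI of $\varphi$ is recovered exactly by combining the summary with the pending worklist and, for the outermost untimed operator, with the trivial interval $[\flo,\fhi]$ of the corresponding subformula, which accounts for the infinite unseen suffix. For the representative case $\alw\psi$ this is precisely the picture of Fig.~\ref{fig:untimed_alw}: the RoSI equals $\min\!\big(S,\ \min_{i\in\text{window}}\introb(\psi,\pvx{n},t_i),\ [\flo,\fhi]\big)$, where $S$ is the running minimum over all matured indices; when the oldest window entry matures it is removed and $\min$-ed into $S$ while a fresh entry is appended, so the window slides while staying of size $O(k_\varphi)$. The other four families are handled by rewriting their Theorem~\ref{thm:finiteness} derivations so that the terms indexed by pending $i$ are kept explicitly in the worklist while the terms indexed by matured $i$ collapse into $T_n,M_n,U_n$; the needed regrouping is exactly the repeated use of \eqref{eq:equivone}--\eqref{eq:equivtwo} already carried out there. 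Finally, each $\introb(\psi,\pvx{n},t_i)$ in the pending window is produced by running the bounded-horizon algorithm of Section~\ref{sec:algo} on $\psione$ and $\psitwo$, whose worklists are themselves of size $O(k_\varphi)$; summing the constant-size summaries, the pending window, and these sub-monitors yields the claimed $O(k_\varphi)$ bound.

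The step I expect to be the main obstacle is the bookkeeping created by the \emph{staggered} maturation of $p_i$ and $q_i$: since $\last(\psione)$ and $\last(\psitwo)$ may differ, at a given time some $p_i$ may be final while the corresponding $q_i$ is not, so the matured prefixes of the two sequences are not aligned. The recurrences must then be reorganized to advance only when \emph{both} operands at a given index are final, keeping the not-yet-aligned values in the pending window. Verifying that this delayed folding still reproduces the exact RoSI, rather than an over- or under-approximation, is the key correctness argument, and it is where the interval identities \eqref{eq:equivone}--\eqref{eq:equivfour} together with the monotonicity of $\min$ and $\max$ over intervals must be invoked with care.
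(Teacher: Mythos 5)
Your proposal is correct and is essentially the paper's own argument: the paper likewise splits the computation at a cutoff $t_{s_i} < t_i - w_\varphi$ into constant-size summaries over the matured prefix (updated by the same recurrences as in Theorem~\ref{thm:finiteness}, which remain valid since the identities \eqref{eq:equivone}--\eqref{eq:equivfour} hold componentwise for intervals) plus a pending worklist of at most $k_\varphi$ \rsis, recombined exactly as you describe (made explicit in Algorithms~\ref{algo:compute_untimed_until} and~\ref{algo:compute_alw_ev}). The staggered-maturation issue you flag is resolved in the paper just as you suggest, by taking the uniform cutoff $w_\varphi$ to be the maximum of $\last(\psi)$ over all subformulas, so that every operand is final beyond it.
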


\newcommand{\vz}{\mathbf{z}}

% \begin{algorithm}[t!]
% \DontPrintSemicolon
% $r$ $\assign$ $\max\setof{j \mid j \in [s_i,i+1]\ \wedge\ t_j \le \left(t_{i+1}-w\right)}$ \nllabel{algoline:alw_m1} \;
% \If{$r > s_i$}{
%     $s_{i+1}$ $\assign$ $r$ \nllabel{algoline:alw_m2} \;
%     \If{$\WL{s_i} = \emptyset$}{
%         $\WL{s_{i+1}}$ $\assign$ 
%                 $\introb(\psi,\pvx{i+1},t_0)$ \nllabel{algoline:alw_m3} \tcp{I.e, $s_i = -1$} 
%     }\Else{$\WL{s_{i+1}}$ $\assign$ 
%                 $\min\left(\WL{s_i}, 
%                            \inf_{j\in(s_i,s_i+1]} 
%                            \introb(\psi,\pvx{i+1},t_j)\right)$ \nllabel{algoline:alw_m4} \;
%     }
%     \lForEach{$j \in [s_i,s_{i+1}]$}{
%         delete $\WL{j}$ 
%     } \nllabel{algoline:alw_m5}
%     \lForEach{$j \in [s_{i+1}+1, i+1]$}{
%        $\WL{j}$ $\assign$ $\introb(\psi,\pvx{i+1},t_j)$ 
%     }  \nllabel{algoline:alw_m6} 
% }\Else {
%     \lForEach{$j \in [s_i+1,i+1]$}{
%         $\WL{j}$ $\assign$ $\introb(\psi,\pvx{i+1},t_j)$
%     }  \nllabel{algoline:alw_m7}
% }
% \caption{Procedure to update $\worklist[\treeNode_\psi]$ when monitoring $\alw\psi$}
% \label{algo:update}
% \end{algorithm}

\begin{proof} We provide proof sketches. The main argument in each of
the proofs is as follows: For any partial signal $\pvx{i}$, there
are two cases: The first case is when \mbox{$t_0 \ge t_i -
w_\varphi$}. By assumption, there are at most $k_\varphi$
time-points in the interval $[t_0,t_i]$.  Thus, in this case, the
worklists at each of the nodes $\treeNode_\psi$ corresponding to
$\psi \in \subf(\varphi)$ have to track at most $k_\varphi$ \rsi
values in order to compute $\introb(\varphi,\vx,t_0)$.
\vspace{0.5em}

The second case is when \mbox{$t_0 < t_i - w_\varphi$}; this implies
that there is a largest time $t_{s_i}$ in $[t_0,t_1,\ldots,t_i]$ such
that \mbox{$t_{s_i} < t_i - w_\varphi$}.  For the partial signal
$\pvx{i}$, at each time $t \le t_{s_i}$, there is enough information
to compute the exact robustness value of each of the subformulas of
$\varphi$.  The central step is that for each of the formulas
mentioned above, the robustness values in the interval $[t_0,t_{s_i}]$
can be {\em summarized} to a single robustness value.  Furthermore,
the interval $(t_{s_i},t_i]$ can have at most $k_\varphi$ time-points.
Thus, the computation of $\introb(\varphi,\pvx{i},t_0)$ can be split
into tracking a summary for the interval $[t_{s_i},t_i]$ and tracking
at most $k_\varphi$ \rsis in the worklists of the immediate
subformulas of $\varphi$ in the interval $(t_{s_i},t_i]$. We now
explain how the summary information is maintained for each formula.

% The steps that vary for each are: (1) what
% information is maintained as a summary, (2) how the summary is
% combined with the $k_\varphi$ \rsis to compute
% $\introb(\varphi,\pvx{i},t_0)$, and (3) how the summary and the
% worklists for subformulas of $\varphi$ are updated when a new
% time-point $t_{i+1}$ becomes available. Now for each formula, we
% provide a brief sketch of the details.

\vspace{0.5em} \noindent (1) [{\bf $\alw \psi$}]\quad We maintain the
summary $\Summary$ = \mbox{$\inf_{j\in[0,s_i]} \introb(\psi,\pvx{i},t_j)$},
\ie, the infimum over all exact robustness values computable over the
partial signal $\pvx{i}$.  When a new time-point $(t_{i+1},\vx_{i+1})$
becomes available, $\Summary$ is updated if there is a new time
$t_{s_{i+1}}$ for which $\introb(\psi,\pvx{i},t_{s_{i+1}})$ can be exactly
computed; otherwise, the new value is used to update all entries
$\introb(\psi,\pvx{i+1},t_j)$ for $t_j \in [t_{s_i + 1},t_i]$, and a new
entry corresponding to time $t_{i+1}$ is added to
$\worklist[\treeNode_\psi]$. Please see Fig.~\ref{fig:untimed_alw} for a
depiction of this step. We then establish the following: (1) There are at
most $k_\varphi$ entries (each corresponding to $\introb(\psi,\pvx{i},t_j)$
for $t_j \in (t_{s_i}, t_i]$) in $\worklist[\treeNode_\psi]$. This is true
because there can be at most $k_\varphi$ consecutive time-points that do
not update $\Summary$ in any interval of length $w_\varphi$.   (2) We show
by induction that the $\inf$ of $\Summary$ and the $k_\varphi$ entries in
$\worklist[\treeNode_\psi]$ is equal to $\introb(\alw\psi,\pvx{i},t_0)$.
\vspace{0.5em}

% \begin{equation}
% \label{eq:until_summary}
% \begin{array}{l@{\hspace{3em}}l}
% \end{array}
% \end{equation}

\noindent (2) [{\bf $\psione \Until \psitwo$}]\quad We maintain the
following two quantities as the summary:\\ (a) $\Sone$ =
$\introb(\alw_{[0,t_{s_i}]}\psione, \pvx{i}, t_0)$ and (b) $\Stwo$ =
\mbox{$\introb(\psione \Until_{[0,t_{s_i}]} \psitwo, \pvx{i}, t_0)$}.  In
$\worklist[\treeNode_\psione]$ and $\worklist[\treeNode_\psitwo]$ we store
at most $k_\varphi$ values corresponding to $\introb(\psione, \pvx{i},
t_j)$ and $\introb(\psitwo, \pvx{i}, t_j)$ for $t_j \in (t_{s_i},t_i]$.
The crucial step is to combine $\Sone$ and $\Stwo$ with the entries in
$\worklist[\treeNode_\psione]$ and $\worklist[\treeNode_\psitwo]$ to obtain
$\introb(\psione \Until \psitwo, \pvx{i},t_0)$.  We show that the iterative
procedure in Algorithm~\ref{algo:compute_untimed_until} can accomplish
this.  In its $j^{th}$ iteration $v_1$ is equal to $\dinf_{\ell \in [0,j]}
\introb(\psione,\pvx{i},t_\ell)$, and we can show by induction that $v_2$
is equal to $\dsup_{m \in [0,j]} \min\left(\introb(\psitwo, \pvx{i}, t_m),
\dinf_{\ell \in [0,m]} \introb(\psione,\pvx{i}, t_\ell)\right)$.  Thus, at
the end of the computation, the value computed in $v_2$ is
$\introb(\psione\Until\psitwo, \pvx{i}, t_0)$. 
\begin{algorithm}[t!]
\DontPrintSemicolon
$v_1$ $\assign$ $\Sone$, $v_2$ $\assign$ $\Stwo$ \;
\ForEach{$j \in [s_i+1,i]$}{
    $v_1$ $\assign$ $\min\left(v_1, \introb(\psione, \pvx{i}, t_j) \right)$ \;
    $v_2$ $\assign$ $\sup\left(v_2, \min\left(v_1, \introb(\psitwo,\pvx{i}, t_j)\right)\right)$ \;
}
$\introb(\psione\Until\psitwo, \pvx{i}, t_0)$ $\assign$ $v_2$ \;
\caption{Computing \rsi for untimed Until}
\label{algo:compute_untimed_until}
\end{algorithm}

\vspace{0.5em}
\noindent (3) [{\bf $\alw\ev \psi$}]\quad
We show that we do not need any additional storage for monitoring
$\varphi$.  Concretely, we posit that $\introb(\alw\ev\psi, \pvx{i}, t_0)$
= $\introb(\psi,\pvx{i},t_i)$. We successively rewrite
\mbox{$\introb(\alw\ev\psi, \pvx{i}, t_0)$} = $\dinf_{j\in[0,i]} \sup_{\ell
\in[j,i]} \introb(\psi, \pvx{i}, t_\ell)$ as follows:
\vspace{-0.4em}
\begin{eqnarray}
\inf\left(
    \dsup_{\ell \in[0,i]} \introb(\psi,\pvx{i},t_\ell),
    \dsup_{\ell \in[1,i]} \introb(\psi,\pvx{i},t_\ell),
    \ldots,
    \dsup_{\ell \in [i,i]} \introb(\psi,\pvx{i},t_\ell) 
\right) \label{eq:expand_one} \\
\inf\left(\begin{array}{l}
    \sup(\introb(\psi,\pvx{i},t_i),\quad \sup_{\ell \in[0,i-1]} \introb(\psi,\pvx{i},t_\ell)) \\
    \sup(\introb(\psi,\pvx{i},t_i),\quad \sup_{\ell \in[1,i-1]}
    \introb(\psi,\pvx{i},t_\ell)), \ldots, \introb(\psi,\pvx{i},t_\ell) 
\end{array} \right) \label{eq:expand_two}
\end{eqnarray}
In the above, to go from \eqref{eq:expand_one} to \eqref{eq:expand_two}, we
expand the inner $\sup$ expressions, and observe that the last term in the
$\inf$ evaluates to $\introb(\psi,\pvx{i},t_i)$. For the final step, we
observe that $\inf(I_1, \sup(I_1,I_2), \ldots, \sup(I_1,I_n)) = I_1$, and
thus, \eqref{eq:expand_two} simplifies to $\introb(\psi,\pvx{i},t_i)$ By
duality, a similar proof works for $\ev\alw\psi$. 

% \hfill $\blacksquare$
\vspace{0.5em}
\noindent (4) [{\bf $\alw(\psione \vee \ev \psitwo)$}] \quad
We maintain one quantity as the summary information: $\Summary$ =
$\introb(\alw_{[0,s_i]}(\psione \vee \ev \psitwo), \pvx{i}, t_0)$.
Additionally, we store at most $k_\varphi$ entries corresponding to
$\introb(\psione,\pvx{i},t_j)$ in $\worklist[\treeNode_\psione]$ and at
most $k_\varphi$ entries corresponding to $\introb(\psitwo,\pvx{i},t_j)$ in
$\worklist[\treeNode_\psitwo]$.  To compute $\introb(\varphi,\pvx{i},t_0)$,
we use Algorithm~\ref{algo:compute_alw_ev}.
\begin{algorithm}[t!]
\DontPrintSemicolon
$v$ $\assign$ $\Summary$ \;
\ForEach{$j \in [s_i+1,i]$}{
    $v$ $\assign$ $\sup(\introb(\psitwo,\pvx{i},t_j),\inf(v,\introb(\psione,\pvx{i},t_j)))$ \;
}
$\introb(\alw(\psione \vee \ev\psitwo), \pvx{i}, t_0)$ $\assign$ $v$ \;
\caption{Computing \rsi for $\alw(\psione \vee \ev\psitwo)$}
\label{algo:compute_alw_ev}
\end{algorithm}
To complete the proof we observe that Algorithm~\ref{algo:compute_alw_ev}
computes expression \eqref{eq:alw_ev} that has nested and alternating
$\sup$s and $\inf$s:
\begin{equation}
\label{eq:alw_ev}
\sup\left(\introb(\psitwo,\pvx{i},t_i),
\inf\left(\introb(\psione,\pvx{i},t_i), 
\sup\left(\introb(\psitwo, \pvx{i},t_{i-1}), \ldots, 
\Summary\right)\right)\right)
\end{equation}
Using the identity $\sup(I_1, \inf(I_2, I_3))$ = $\inf(\sup(I_1,I_2),
\sup(I_1,I_3))$, we can rearrange the above expression to obtain:
\begin{equation}
\inf\left(\begin{array}{l}
    \sup\left(\introb(\psitwo,\pvx{i},t_i),\introb(\psione,\pvx{i},t_i)\right), \\
    \sup\left(\begin{array}{l}
              \introb(\psitwo,\pvx{i},t_i),\introb(\psitwo,\pvx{i},t_{i-1}), \\
              \inf\left(\introb(\psione,\pvx{i},t_{i-1}), 
                        \sup\left(\introb(\psitwo,\pvx{i},t_{i-2}), \ldots, \Summary\right)
                   \right)
              \end{array}
            \right)
    \end{array}\right)
\end{equation}
By repeated use of this identity on the expression in the second line, we
get the expression $\dinf_{j \in [0,i]} \left(\max\left(\introb(\psione,\pvx{i},t_j),
\dsup_{\ell \in [j,i]} \introb(\psitwo, \pvx{i}, t_\ell)\right)\right)$, which is
equal to $\introb(\varphi,\pvx{i},t_0)$. \hfill $\blacksquare$
\end{proof}

%%% Local Variables:
%%% mode: latex
%%% TeX-master: "main"
%%% End:

\section{Experimental Results}\label{sec:experiments}
We implemented Algorithm~\ref{algo:online_monitoring} as a stand-alone tool
that can be plugged in loop with any black-box simulator and evaluated it
using two practical real-world applications. We considered the following
criteria: (1) On an average, what fraction of simulation time can be saved
by online monitoring? (2) How much overhead does online monitoring add, and
how does it compare to a na\"{i}ve implementation that at each step
recomputes everything using an offline algorithm?

\subsection{Diesel Engine Model (DEM)}
The first case study is an industrial-sized Simulink\rt model of a
prototype airpath system in a diesel engine. The closed-loop model consists
of a plant model describing the airpath dynamics, and a controller
implementing a proprietary control scheme.  The model has more than $3000$
blocks, with more than $20$ lookup tables approximating high-dimensional
nonlinear functions.  Due to the significant model complexity, the speed of
simulation is about $5$ times slower, \ie, simulating $1$ second of
operation takes $5$ seconds in Simulink\rt.  As it is important to simulate
this model over a long time-horizon to characterize the airpath behavior
over extended periods of time, savings in simulation-time by early
detection of requirement violations is very beneficial.  We selected two
parameterized safety requirements  after discussions with the control
designers, (shown in Eq.~\eqref{eq:diesel_prop1}-\eqref{eq:diesel_prop2}).
Due to proprietary concerns, we suppress the actual values of the
parameters used in the requirements.
% \vspace{-0.5em}
\begin{eqnarray}
\fover(\mathbf{p_1}) & = & \alw_{[a,b]}(\vx < c) 
    \label{eq:diesel_prop1} \\
\ftemp(\mathbf{p_2}) & = & \alw_{[a,b]}(\abs{\vx} > c\! \implies\!\!(\ev_{[0, d]}\!\abs{\vx} < e ))
    \label{eq:diesel_prop2}
\end{eqnarray}

Property $\fover$ with parameters $\mathbf{p_1} = (a,b,c)$ specifies
that in the interval $[a,b]$, the overshoot on the signal $\vx$ should
remain below a certain threshold $c$.  Property $\ftemp$ with
parameters $\mathbf{p_2} = (a,b,c,d,e)$ is a specification on the
settling time of the signal $\vx$.  It specifies that in the time
interval $[a,b]$ if at some time $t$, $\abs{\vx}$ exceeds $c$ then it
settles to a small region ($\abs{\vx} < e$) before $t+d$.  In
\tabref{diesel}, we consider three different valuations $\nu_1$,
$\nu_2$, $\nu_3$ for $\mathbf{p_1}$ in the requirement
$\fover(\mathbf{p_1})$, and two different valuations $\nu_4$, $\nu_5$
for $\mathbf{p_2}$ in the requirement $\ftemp(\mathbf{p_2})$.
\vspace{0.5em}

The main reason for the better performance of the online algorithm is that
simulations are time-consuming for this model.  The online algorithm can
terminate a simulation earlier (either because it detected a violation or
obtained a concrete robust satisfaction interval), thus obtaining
significant savings.  For $\fover(\nu_3)$, we choose the parameter values
for $a$ and $b$ such that the online algorithm has to process the entire
signal trace, and is thus unable to terminate earlier.  Here we see that
the total overhead (in terms of runtime) incurred by the extra book-keeping
by Algorithm~\ref{algo:online_monitoring} is negligible (about $0.1$\%).

\begin{table}[t]
\centering
\begin{tabular*}{.99\textwidth}{@{\extracolsep{\fill}}l cc cc}
\toprule
Requirement     & Num.   & Early         & \mtc{Simulation Time (hours)} \\
\cline{4-5}
                & Traces & Termination   & Offline  & Online        \\
\midrule          
\vspace{-0.2em}
$\fover(\nu_1)$ & 1000   & 801           & 33.3803  & 26.1643      \\
\vspace{-0.2em}
$\fover(\nu_2)$ & 1000   & 239           & 33.3805  & 30.5923      \\
\vspace{-0.2em}
$\fover(\nu_3)$ & 1000   & 0             & 33.3808  & 33.4369      \\
\vspace{-0.2em}
$\ftemp(\nu_4)$ & 1000   & 595           & 33.3822  & 27.0405      \\
\vspace{-0.2em}
$\ftemp(\nu_5)$ & 1000   & 417           & 33.3823  & 30.6134      \\
\bottomrule
\end{tabular*}
\caption{Experimental results on DEM.}
\label{tab:diesel}
\end{table}

% \subsection{CPSGrader: Auto-grading in Online Virtual Labs}
\subsection{CPSGrader}
CPSGrader~\cite{juniwal-emsoft14,cpsgrader-www} is a publicly-available
automatic grading and feedback generation tool for online virtual labs in
cyber-physical systems.  It employs temporal logic based testers to check
for common fault patterns in student solutions for lab assignments.
CPSGrader uses the National Instruments Robotics Environment Simulator to
generate traces from student solutions and monitors STL properties (each
corresponding to a particular faulty behavior) on them. In the published
version of CPSGrader~\cite{juniwal-emsoft14}, this is done in an offline
fashion by first running the complete simulation until a pre-defined
cut-off and then monitoring the STL properties on offline traces. At a
step-size of $5$ ms, simulating $6$ sec. of real-world operation of the
system takes $1$ sec. for the simulator.  When students use CPSGrader for
active feedback generation and debugging, simulation constitutes the major
chunk of the application response time. Online monitoring helps in reducing
the response time by avoiding unnecessary simulations, giving the students
feedback as soon as faulty behavior is detected.

We evaluated our online monitoring algorithm, on the traces and STL
properties used in the published version of
CPSGrader~\cite{juniwal-emsoft14,cpsgrader-www}. These traces are the
result of running actual student submissions on a battery of tests. For
lack of space, we refer the reader to~\cite{juniwal-emsoft14} for details
about the tests and STL properties. As an illustrative example, we show the
$\tbKB$ property in Eq.~\ref{eq:keep_bump}: % \vspace{-0.5em}
\begin{equation} \label{eq:keep_bump}
\fkb = \ev_{[0, 60]} \alw_{[0, 5]}\left(\bumpr(t) \lor \bumpl(t)\right)
\end{equation}
For each STL property, Table~\ref{tab:mooc_study} compares the total
simulation time needed for both the online and offline approaches, summed
over all traces.  For the offline approach, a suitable simulation cut-off
time of $60$ sec. is chosen.  At a step-size of 5 ms, each trace is roughly
of length $1000$. For the online algorithm, simulation terminates before
this cut-off if the truth value of the property becomes known, otherwise it
terminates at the cut-off. Table~\ref{tab:mooc_study} also shows the
monitoring overhead incurred by a na\"{i}ve online algorithm that performs
complete recomputation at every step against the overhead incurred by
Alg.~\ref{algo:online_monitoring}.
\begin{table}[t]
    \setlength{\tabcolsep}{2pt}
\centering
\footnotesize{%
        \begin{tabular*}{.99\textwidth}{@{\extracolsep{\fill}}l cc rr rr}
\toprule
STL Test Bench &  Num.    & Early         & \mtc{Sim. Time (mins)} & \mtc{Overhead (secs)}\\
               &  Traces  &  Termination  & Offline & Online       & Na\"{i}ve & Alg.~\ref{algo:online_monitoring} \\
\midrule
\vspace{-0.2em}
$\tbAF$ & 1776 & 466 & 296 & 258 &  553 &  9 \\
\vspace{-0.2em}
$\tbAL$ & 1778 & 471 & 296 & 246 & 1347 & 30 \\
\vspace{-0.2em}
$\tbAR$ & 1778 & 583 & 296 & 226 & 1355 & 30 \\
\vspace{-0.2em}
$\tbHCa$& 1777 &  19 & 395 & 394 &  919 & 11 \\
\vspace{-0.2em}
$\tbHCb$& 1556 & 176 & 259 & 238 &  423 &  7 \\
\vspace{-0.2em}
$\tbHCc$& 1556 & 124 & 259 & 248 &  397 &  7 \\
\vspace{-0.2em}
$\tbF$  & 1451 &  78 & 242 & 236 &  336 &  6 \\
\vspace{-0.2em}
$\tbKB$ & 1775 & 468 & 296 & 240 & 1.2$\times10^4$ & 268 \\
\vspace{-0.2em}
$\tbWH$ & 1556 &  71 & 259 & 253 & 1.9$\times10^4$ & 1.5$\times10^3$ \\
\bottomrule
\end{tabular*}
}

\caption{\small{Evaluation of online monitoring for CPSGrader. Each STL Test
        Bench has an associated STL property. 
}}

\label{tab:mooc_study}
\end{table}
Table~\ref{tab:mooc_study} demonstrates that online monitoring ends up
saving up to 24\% simulation time ($>10$\% in a majority of cases). The
monitoring overhead of Alg.~\ref{algo:online_monitoring} is negligible
($<1$\%) as compared to the simulation time and it is less than the
overhead of the na\"{i}ve online approach consistently by a factor of 40x
to 80x.

%%% Local Variables:
%%% mode: latex
%%% TeX-master: "main"
%%% End:

% \section{Acknowledgement}
% This work was supported in part by the TerraSwarm Research Center, one of six centers supported by
% the STARnet phase of the Focus Center Research Program (FCRP) a Semiconductor Research Corporation
% program sponsored by MARCO and DARPA.

%\small
%\scriptsize
\bibliographystyle{abbrv}
\bibliography{monitoring}

\end{document}